\documentclass[letterpaper,11pt]{article}

\usepackage{amsthm}
\usepackage{amsfonts}
\usepackage{amsmath}
\usepackage{graphicx}
\usepackage{epstopdf}
\DeclareGraphicsRule{.tif}{png}{.png}{`convert #1 `basename #1 .tif`.png}
\usepackage{lineno}
\usepackage{setspace} 
\usepackage{verbatim}
\usepackage{authblk}

\newtheorem{theorem}{Theorem}
\newtheorem*{theorem*}{Theorem}
\newtheorem{conjecture}[theorem]{Conjecture}

\newtheorem{lemma}[theorem]{Lemma}


\newcommand{\fc}{\mathcal C}

%

\newenvironment{enumerate*}{
\begin{enumerate}
  \setlength{\itemsep}{5pt}
  \setlength{\parskip}{0pt}
  \setlength{\parsep}{0pt}
}{\end{enumerate}}

\newenvironment{itemize*}{
\begin{itemize}
  \setlength{\itemsep}{5pt}
  \setlength{\parskip}{0pt}
  \setlength{\parsep}{0pt}
}{\end{itemize}}

\title{A local strengthening of Reed's $\omega$, $\Delta$, $\chi$ conjecture for quasi-line graphs}
\author{Maria Chudnovsky\thanks{Supported by NSF grants DMS-1001091 and IIS-1117631.}}\author{Andrew D.~King\thanks{Corresponding author.  Email: andrew.d.king@gmail.com.  Supported by an NSERC Postdoctoral Fellowship.}}\author{Matthieu Plumettaz\thanks{Email: mp2761@columbia.edu. Partially supported by NSF grant DMS-1001091.}}
\affil{Department of Industrial Engineering and Operations Research\\Columbia University, New York NY}
\author{Paul Seymour\thanks{Supported by ONR grant N00014-10-1-0680 and NSF grant DMS-0901075.}}
\affil{Department of Mathematics\\Princeton University, Princeton NJ}

\bibliographystyle{plain}

\addtolength{\hoffset}{-.35in}
\addtolength{\textwidth}{.7in}

\begin{document}

\maketitle

\begin{abstract}
Reed's $\omega$, $\Delta$, $\chi$ conjecture proposes that every graph satisfies $\chi\leq \lceil\frac 12(\Delta+1+\omega)\rceil$; it is known to hold for all claw-free graphs.  In this paper we consider a local strengthening of this conjecture.  We prove the local strengthening for line graphs, then note that previous results immediately tell us that the local strengthening holds for all quasi-line graphs.  Our proofs lead to polytime algorithms for constructing colourings that achieve our bounds: $O(n^2)$ for line graphs and $O(n^3m^2)$ for quasi-line graphs.  For line graphs, this is faster than the best known algorithm for constructing a colouring that achieves the bound of Reed's original conjecture.
\end{abstract}

\newpage

\section{Introduction}

All graphs and multigraphs we consider in this paper are finite.  Loops are permitted in multigraphs but not graphs.  Given a graph $G$ with maximum degree $\Delta(G)$ and clique number $\omega(G)$, the chromatic number $\chi(G)$ is trivially bounded above by $\Delta(G)+1$ and below by $\omega(G)$.  Reed's $\omega$, $\Delta$, $\chi$ conjecture proposes, roughly speaking, that $\chi(G)$ falls in the lower half of this range:

\begin{conjecture}[Reed]\label{con:reed}
For any graph $G$,
$$\chi(G)\leq \left\lceil \tfrac 12(\Delta(G)+1+\omega(G))\right\rceil.$$
\end{conjecture}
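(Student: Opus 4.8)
\medskip

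\noindent\textbf{Proof proposal.} The plan is to reduce to vertex-critical graphs and then split the analysis according to how large $\omega(G)$ is relative to $\Delta(G)$. Write $k=\lceil\tfrac12(\Delta(G)+1+\omega(G))\rceil$; we want $\chi(G)\le k$. If Conjecture~\ref{con:reed} fails, choose a counterexample $G$ with $|V(G)|$ minimum. Then for every $v$ we have $\chi(G-v)\le\lceil\tfrac12(\Delta(G-v)+1+\omega(G-v))\rceil\le k<\chi(G)$, so $G$ is $\chi(G)$-vertex-critical; hence $\delta(G)\ge\chi(G)-1\ge k$, $G$ is $2$-connected and has no clique cutset, and by Gallai's theorem the subgraph induced by the vertices of degree exactly $\chi(G)-1$ is a Gallai forest (each block a clique or an odd cycle). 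I would use all of this structure to assume $G$ is as constrained as possible before doing any colouring.

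In the \textbf{dense regime}, where $\omega(G)\ge\Delta(G)+1-t$ for a suitable threshold $t=t(\Delta(G))$, the target $k$ lies within $t/2$ of $\Delta(G)+1$, so almost the whole palette is available and a direct, deterministic argument should suffice: fix a maximum clique $K$, note that every vertex outside $K$ has only a few neighbours in $K$ (since $\deg\le\Delta$ and $|K|$ is large), colour $V(G)\setminus K$ greedily in a carefully chosen order, and extend the colouring to $K$ using the accumulated slack. This is essentially Reed's own treatment of graphs whose clique number is close to their maximum degree, and I expect it to go through.

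In the \textbf{sparse regime}, where $\omega(G)$ is a fraction of $\Delta(G)$ bounded away from $1$, the target $k$ is genuinely below $\Delta(G)+1$, so colours must actually be saved. Here I would run the probabilistic \emph{naive colouring procedure}: activate each vertex independently, give it a uniformly random colour from a palette of size about $k$, uncolour it whenever a neighbour receives the same colour, and invoke the Lov\'asz Local Lemma together with Talagrand/Azuma concentration to show that, with positive probability, every vertex retains many available colours in its closed neighbourhood, so the partial colouring extends greedily. The essential refinement is to reserve in advance a separate block of colours for the locally dense parts of $G$ — the places where $\omega$-sized cliques sit — so those regions can be finished without conflict, while the sparsity of the rest of each neighbourhood is what creates the saving. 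Iterating yields a bound of the shape $\chi(G)\le(1-\varepsilon)(\Delta(G)+1)+\varepsilon(\omega(G)+1)$ for some absolute constant $\varepsilon>0$.

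The main obstacle — and the reason Conjecture~\ref{con:reed} is still open — is quantitative: the probabilistic argument above delivers only a small explicit $\varepsilon$, far from the $\varepsilon=\tfrac12$ the conjecture demands, while the deterministic dense-regime argument descends only to $\omega=\Delta+1-t$ with $t$ sublinear in $\Delta$. Bridging the wide intermediate range, where $\omega$ is a constant fraction of $\Delta$ strictly between these two extremes, appears to require either a dramatically sharper analysis of the colouring procedure (entropy compression, or a rounding of the fractional relaxation that is tight up to lower-order terms) or a structural decomposition theorem for critical graphs in that range; no such result is currently known in general. For this reason the present paper does not attempt the full conjecture, but instead proves a stronger, \emph{local} version of it for the restricted class of quasi-line graphs, where claw-freeness supplies exactly the structural control that the general case lacks.

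\medskip
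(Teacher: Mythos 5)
You have not given a proof, and you say as much in your own final paragraph: the statement labelled Conjecture~\ref{con:reed} is an open conjecture, the paper offers no proof of it (it only cites it as motivation and proves a local strengthening for the restricted class of quasi-line graphs), and your proposal leaves the entire intermediate regime --- where $\omega(G)$ is a constant fraction of $\Delta(G)$ strictly between your ``dense'' and ``sparse'' thresholds --- completely uncovered. Concretely: the criticality reductions (minimum degree at least $k$, no clique cutset, Gallai trees on the low-degree vertices) are sound but standard and do not by themselves force $G$ into either of your two regimes; the dense-regime argument, even granting Reed's treatment, only reaches down to $\omega\geq\Delta+1-t$ for $t$ sublinear in $\Delta$; and the naive-colouring-plus-Local-Lemma machinery is known to yield only $\chi\leq(1-\varepsilon)(\Delta+1)+\varepsilon\,\omega$ for a very small explicit $\varepsilon$, nowhere near the $\varepsilon=\tfrac12$ the conjecture requires. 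So the two halves of your case analysis do not meet, and no amount of care in executing either half as described will close that gap --- that is precisely why the conjecture is still open. Your writeup is an accurate survey of the state of the art, but it should not be presented as a proof of the statement; if anything, it is an argument that the statement cannot currently be proved by these methods.
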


One of the first classes of graphs for which this conjecture was proved is the class of line graphs \cite{kingrv07}.  Already for line graphs the conjecture is tight, as evidenced by the strong product of $C_5$ and $K_\ell$ for any positive $\ell$; this is the line graph of the multigraph constructed by replacing each edge of $C_5$ by $\ell$ parallel edges.  The proof of Conjecture \ref{con:reed} for line graphs was later extended to quasi-line graphs \cite{kingthesis, kingr08} and later claw-free graphs \cite{kingthesis}  (we will define these graph classes shortly).  In his thesis, King proposed a local strengthening of Reed's conjecture.  For a vertex $v$, let $\omega(v)$ denote the size of the largest clique containing $v$.
\begin{conjecture}[King \cite{kingthesis}]\label{con:local}
For any graph $G$,
$$\chi(G) \leq   \max_{v\in V(G)}\left\lceil \tfrac 12(d(v)+1+\omega(v))\right\rceil.$$
\end{conjecture}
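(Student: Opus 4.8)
The plan is to attack a minimal counterexample, squeeze it into a narrow structural regime, and then try to colour it with the probabilistic toolkit developed for Reed's conjecture itself. First I would take $G$ to be vertex-critical with $\chi(G)=k+1$, where $k=\max_{v\in V(G)}\lceil\tfrac12(d(v)+1+\omega(v))\rceil$; this reduction is legitimate because for any $H\subseteq G$ we have $d_H(v)\le d_G(v)$ and $\omega_H(v)\le\omega_G(v)$, so the right-hand side of Conjecture~\ref{con:local} never increases on subgraphs, and some colour-critical subgraph realizes $\chi(G)$. Vertex-criticality forces $d(v)\ge\chi(G)-1=k$ for every $v$, while $\lceil\tfrac12(d(v)+1+\omega(v))\rceil\le k$ forces $d(v)+\omega(v)\le 2k-1$; combining these, $\omega(v)\le k-1$ and $k\le d(v)\le 2k-2$ at every vertex. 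So a would-be counterexample has every degree within a factor two of $\chi$ and no clique of size $\chi-1$ — the ``moderately dense'' regime where even the non-local conjecture is open. This also shows the local statement is strictly stronger than the global one: applying Conjecture~\ref{con:reed} to $G$ only gives $\chi(G)\le\lceil\tfrac12((2k-2)+1+(k-1))\rceil=\lceil\tfrac12(3k-2)\rceil$, which for $k\ge 3$ does not contradict $\chi(G)=k+1$. Hence a full proof needs genuinely new ingredients, and the unconditional content I can deliver is the line-graph case, from which the quasi-line case follows via the known extensions of Reed's conjecture.

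Second, I would establish the fractional relaxation $\chi_f(G)\le\max_{v}\tfrac12(d(v)+1+\omega(v))$ by localizing the known proof of the fractional version of Reed's conjecture. The natural route is the clique-hitting-set method: iteratively delete a stable set meeting every locally maximum clique, tracking how $\max_v\tfrac12(d(v)+1+\omega(v))$ decays against the number of deletions, and balancing the ``sparse'' contribution $1/(d(v)+1)$ against the ``dense'' contribution governed by $\omega(v)$. I expect this step to go through with essentially only bookkeeping, and it already disposes of the clique-governed part of the analysis: near any vertex $v$ attaining the maximum with $\omega(v)$ large, the constraint $d(v)+\omega(v)\le 2k-1$ says the neighbourhood is almost a union of a few large cliques with little slack, so it can be finished greedily once the dense blocks are set aside.

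The crux — and the expected main obstacle — is closing the integrality gap, i.e.\ getting from Reed's $(1-\varepsilon)$-type integral bound $\chi(G)\le\lceil(1-\varepsilon)(\Delta+1)+\varepsilon\omega\rceil$ to the coefficient $\tfrac12$. Here my plan is an iterated semi-random (R\"odl nibble) colouring controlled by the Lov\'asz Local Lemma, with a reserved pool of colours dedicated to the locally dense regions, and a charging argument that pays for the saving of roughly $\tfrac12 d(v)$ colours at each vertex $v$ using the edges present among the neighbours of $v$: each such edge lets one colour class absorb two neighbours of $v$ at once, and the hypothesis $d(v)+\omega(v)\le 2k-1$ is precisely the averaged assertion that enough such edges surround every vertex simultaneously. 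Converting this per-vertex averaged assertion into a uniform probabilistic guarantee over all of $G$ is exactly where the argument stalls for general graphs — the same wall that keeps Conjecture~\ref{con:reed} open — so, absent a new idea there, the reachable target is the conjecture on classes where the charging is exact and local rather than averaged. Line graphs (and hence quasi-line graphs) are such a class: every neighbourhood is a union of two cliques, so the ``two neighbours per colour class'' accounting becomes an honest structural fact, the probabilistic machinery can be replaced by the deterministic, algorithmic colouring argument carried out in the body of the paper, and the bounds can be met in time $O(n^2)$ for line graphs and $O(n^3m^2)$ for quasi-line graphs.
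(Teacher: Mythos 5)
The statement you were asked to prove is stated in the paper as a \emph{conjecture} and remains open in general; the paper itself proves it only for quasi-line graphs (Theorem \ref{thm:ql}). Your proposal is candid about that, but as a proof attempt it has a genuine gap: the only unconditional content you claim to deliver --- the line-graph case and its quasi-line extension --- is explicitly deferred to ``the deterministic, algorithmic colouring argument carried out in the body of the paper,'' which is precisely the argument a blind proof has to supply. Your preparatory steps are fine but do not carry weight: the critical-graph arithmetic ($k\le d(v)\le 2k-2$, $\omega(v)\le k-1$) is correct and standard, and the fractional relaxation is Theorem \ref{thm:frac} (McDiarmid), which the paper cites rather than re-proves. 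Your proposed route for the integral bound (nibble plus Local Lemma plus a charging argument) is not what the paper does anywhere, and as you yourself observe it stalls at the same point as Reed's original conjecture.

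The concrete content you are missing is the following. For line graphs the paper passes to edge-colourings of multigraphs: the local bound becomes the parameter $\gamma_l'(G)$ of (\ref{eq:main}), in which the local clique term enters through $t(uv)$, the maximum number of edges in a triangle on $\{u,v,w\}$. The proof is then a Vizing-fan argument, not a pairing or charging scheme: Lemmas \ref{lem:algo1}--\ref{lem:algo3} show that a partial colouring extends whenever two fan vertices share a missing colour, or whenever some maximal fan has size $1$ or at least $3$ (the size-$\ge 3$ case is killed by a counting argument against $\gamma_l'$); the heart of the proof, Lemma \ref{lem:main}, chains maximal fans of size exactly $2$ into a cycle $v_0,\dots,v_{j-1}$ and derives a contradiction by counting the edges of $H_A$ against the colours missing on the cycle. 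Your intuition that ``every neighbourhood is a union of two cliques, so each colour class absorbs two neighbours'' is the right heuristic for why the factor $\tfrac12$ is achievable on this class, but it is not a proof and it is not how the argument runs. The quasi-line extension then requires the Chudnovsky--Seymour structure theorem: circular interval graphs are handled by Niessen--Kind's $\chi=\lceil\chi_f\rceil$ together with the fractional bound, and the remaining graphs are reduced along clique cutsets, nonlinear homogeneous pairs of cliques, and canonical interval 2-joins (Lemma \ref{lem:quasilinemce2}). None of this machinery appears in your proposal, so the special cases you claim are not actually established, and the general conjecture of course remains open.
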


There are several pieces of evidence that lend credence to Conjecture \ref{con:local}.  First is the fact that the fractional relaxation holds.  This was noted by McDiarmid as an extension of a theorem of Reed \cite{molloyrbook}; the proof appears explicitly in \cite{kingthesis} \S2.2:

\begin{theorem}[McDiarmid]\label{thm:frac}
For any graph $G$,
$$\chi_f(G) \leq   \max_{v\in V(G)}\left( \tfrac 12(d(v)+1+\omega(v))\right).$$
\end{theorem}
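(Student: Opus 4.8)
The plan is to pass to the linear-programming dual of the fractional chromatic number and argue by induction on $|V(G)|$. Recall that for any graph $G$,
$\chi_f(G)=\max\{\,w(V(G))/\alpha_w(G) : w\colon V(G)\to\R_{\geq 0},\ w\not\equiv 0\,\}$,
where $\alpha_w(H)$ is the maximum weight of an independent set of $H$; and recall the elementary bound $w(V(H))\leq \chi_f(H)\cdot\alpha_w(H)$, valid for every graph $H$ and every nonnegative $w$ (take a minimum fractional cover of $V(H)$ by independent sets and double count). Writing $t:=\max_{v\in V(G)}\tfrac12(d(v)+1+\omega(v))$, it therefore suffices to prove that $\alpha_w(G)\geq w(V(G))/t$ for every nonnegative weighting $w$. (Equivalently, in the probabilistic language of Reed's original proof, one seeks a distribution on the independent sets of $G$ under which each $v$ lies in a random independent set with probability at least $2/(d(v)+1+\omega(v))$.)

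Fix $w$ and let $S^*$ be an independent set of maximum weight, so $\alpha_w(G)=w(S^*)$ and the goal becomes $w(V(G)\setminus S^*)\leq (t-1)w(S^*)$. Two structural facts drive the argument. First, every $v\notin S^*$ satisfies $w(v)\leq w(N(v)\cap S^*)$, because $(S^*\setminus N(v))\cup\{v\}$ is independent and has weight at most $w(S^*)$. Second, for each $s\in S^*$ the graph $G[N(s)]$ has strictly smaller local parameters than $G$: for every $u\in N(s)$ we have $d_{G[N(s)]}(u)\leq d_G(u)-1$ (since $u\sim s\notin N(s)$) and $\omega_{G[N(s)]}(u)\leq \omega_G(u)-1$ (a clique of $G[N(s)]$ through $u$ becomes a clique of $G$ through $u$ upon adding $s$); since $|N(s)|<|V(G)|$, the induction hypothesis yields $\chi_f(G[Y])\leq t-1$ for every $Y\subseteq N(s)$. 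Consequently, if one can distribute $V(G)\setminus S^*$ among the sets $N(s)$, $s\in S^*$, say $V(G)\setminus S^*=\bigcup_{s\in S^*}Y_s$ with $Y_s\subseteq N(s)$ disjoint and $\alpha_w(G[Y_s])\leq w(s)$, then
$w(V(G)\setminus S^*)=\sum_s w(Y_s)\leq \sum_s \chi_f(G[Y_s])\,\alpha_w(G[Y_s])\leq (t-1)\sum_s w(s)=(t-1)w(S^*)$,
which is what we want.

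I expect the main obstacle to be producing that assignment, i.e.\ controlling vertices with several neighbours in $S^*$. Any vertex $v$ with $N(v)\cap S^*=\{s\}$ may safely be put in $Y_s$: then $(S^*\setminus\{s\})\cup I$ is independent for every independent $I\subseteq Y_s$, so $\alpha_w(G[Y_s])\leq w(s)$ as required. The delicate point, which is the heart of Reed's argument, is the vertices with at least two neighbours in $S^*$: for such $v$ one only knows $w(v)\leq\sum_{s\in N(v)\cap S^*}w(s)$ with at least two summands, so they are ``cheap'', and one must route their weight so as not to overspend any single budget $w(s)$ — this typically needs a weighted charging together with a careful (e.g.\ secondarily optimal) choice of $S^*$. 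The remaining thing to verify — and this is precisely McDiarmid's observation that Reed's proof is already local — is that every graph and subgraph occurring in the induction is measured against its own per-vertex quantities $d(u),\omega(u)$ rather than the global maxima, so that passing from $G$ to $G[N(s)]$ genuinely lowers the bound $t$ by $1$; this is immediate from the parameter estimates above. The base cases ($|V(G)|\leq 1$, or $G$ edgeless, in which case $S^*=V(G)$) are trivial.
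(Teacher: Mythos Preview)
The paper itself does not prove this theorem: it attributes the observation to McDiarmid and refers the reader to \cite{kingthesis}, \S2.2, for an explicit argument. So there is no in-paper proof to compare against, and your proposal must be judged on its own.

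Your reduction to the dual (show $w(V)\le t\,\alpha_w(G)$ for every weighting) is correct, and your induction step inside $G[N(s)]$ is exactly McDiarmid's point: for $u\in N(s)$ both $d(u)$ and $\omega(u)$ drop by at least one, so the local bound drops by one, and this is a vertex-by-vertex statement rather than a global one. That part is fine.

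The genuine gap is the partition step, and it is not just ``delicate'' --- the object you ask for need not exist. Take $G=C_5$ with unit weights, where $t=\tfrac52$ and the theorem is tight. Up to rotation the only maximum independent set is $S^*=\{v_1,v_3\}$, and $V\setminus S^*=\{v_2,v_4,v_5\}$ with $v_5\in N(v_1)$, $v_4\in N(v_3)$, and $v_2\in N(v_1)\cap N(v_3)$. Whichever $Y_s$ receives $v_2$ then contains an independent pair ($\{v_2,v_5\}$ or $\{v_2,v_4\}$) of weight $2>w(s)=1$, so $\alpha_w(G[Y_s])\le w(s)$ fails. The same obstruction persists if you split $v_2$'s weight fractionally between $v_1$ and $v_3$: with a $\lambda,1-\lambda$ split the two max-weight independent sets have weights $1+\lambda$ and $2-\lambda$, summing to $3>w(S^*)=2$. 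No ``secondarily optimal'' choice of $S^*$ helps, by symmetry. Since the bound is tight here, any argument that passes through $\sum_s\chi_f(G[Y_s])\,\alpha_w(G[Y_s])$ is already too lossy: you are trying to certify $w(V\setminus S^*)\le 3$ via a sum that is at least $(t-1)\cdot 3=\tfrac92$.

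So the issue is not that you have left a routine charging argument to the reader; the chain of inequalities you set up cannot close. The correct proof still uses the drop of local parameters in neighbourhoods of a well-chosen independent set, but the bookkeeping is over fractional coverings rather than over a partition of $V\setminus S^*$ into the $Y_s$, precisely so that vertices with several neighbours in $S^*$ may be covered from more than one side without any single budget $w(s)$ being charged their full weight. You would need to rework the combination step along those lines.
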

The second piece of evidence for Conjecture \ref{con:local} is that the result holds for claw-free graphs with stability number at most three \cite{kingthesis}.  However, for the remaining classes of claw-free graphs, which are constructed as a generalization of line graphs \cite{cssurvey}, the conjecture has remained open.  In this paper we prove that Conjecture \ref{con:local} holds for line graphs.  We then show that we can extend this result to quasi-line graphs in the same way that Conjecture \ref{con:reed} was extended from line graphs to quasi-line graphs in \cite{kingr08}.  Our main result is:

\begin{theorem}\label{thm:ql}
For any quasi-line graph $G$,
$$\chi(G) \leq   \max_{v\in V(G)}\left\lceil \tfrac 12(d(v)+1+\omega(v))\right\rceil.$$
\end{theorem}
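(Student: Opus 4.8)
The plan is to first prove Conjecture~\ref{con:local} for line graphs and then to lift it to all quasi-line graphs by following, essentially verbatim, the reduction used in \cite{kingr08} to pass from Reed's conjecture for line graphs to Reed's conjecture for quasi-line graphs. For the line-graph case, write a given line graph as $L(H)$ for a multigraph $H$, so that the assertion becomes a bound on the chromatic index $\chi'(H)=\chi(L(H))$. For an edge $e=uv$ of $H$ we have $d_{L(H)}(e)=|E_H(u)\cup E_H(v)|-1$, and a maximum clique of $L(H)$ through $e$ is either the star of edges at $u$, the star at $v$, or the set of all edges of some triangle on $\{u,v,w\}$; hence $\omega_{L(H)}(e)=\max\{d_H(u),\,d_H(v),\,\max_w(\mu(uv)+\mu(uw)+\mu(vw))\}$, where $\mu$ denotes multiplicity and $w$ ranges over common neighbours of $u$ and $v$. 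So Conjecture~\ref{con:local} for line graphs is exactly
\[
\chi'(H)\ \le\ k^\ast(H)\ :=\ \max_{uv\in E(H)}\Bigl\lceil\tfrac12\bigl(d_{L(H)}(uv)+1+\omega_{L(H)}(uv)\bigr)\Bigr\rceil ,
\]
which, as noted in the introduction, is attained with equality by the line graph of the multigraph obtained from $C_5$ by replacing each edge with $\ell$ parallel edges. Since $\chi'_f(H)=\chi_f(L(H))$, Theorem~\ref{thm:frac} already gives $\chi'_f(H)\le k^\ast(H)$, and both $\chi'_f$ and $k^\ast$ are monotone under taking submultigraphs.

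I would prove $\chi'(H)\le k^\ast(H)$ by induction, taking $H$ to be a $\chi'$-critical counterexample: then $\chi'(H)=k^\ast(H)+1=:k+1$, $H-e$ is $k$-edge-colourable for every edge $e$, and $\Delta(H)\le k$. Fix an uncoloured edge $f=xy$ together with a $k$-edge-colouring of $H-f$, and develop the standard apparatus of missing-colour sets, Vizing fans rooted at $x$, and the alternating (Vizing/Kempe) chains that obstruct freeing a colour common to $x$ and $y$. The purpose of the \emph{local} bound is quantitative: along the fan it supplies, at each edge $uv$, a surplus of roughly $\tfrac12\min(d(u),d(v))$ colours beyond $\max(d(u),d(v))$, while criticality forces every obstructing chain to be long and to exhaust the missing-colour sets of the vertices it meets. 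Playing these two facts against each other should yield either an outright contradiction or a submultigraph $K$ with $|E(K)|>k\lfloor|V(K)|/2\rfloor$; but then $\chi'_f(K)>k=k^\ast(H)\ge\chi'_f(H)\ge\chi'_f(K)$, a contradiction.

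For the passage to quasi-line graphs I would follow \cite{kingr08} line for line, with local quantities in place of global ones. By the structure theorem for quasi-line graphs, together with the standard reduction that replaces a nontrivial homogeneous pair of cliques by a strictly smaller gadget — chosen so that neither side of the inequality grows, so that induction applies — it suffices to treat line graphs, done above, and fuzzy circular interval graphs having no nontrivial homogeneous pair of cliques. For the latter one argues, exactly as in \cite{kingr08}, that a vertex-critical such graph satisfies $\chi=\lceil\chi_f\rceil$, and then the bound follows from the fractional local bound of Theorem~\ref{thm:frac}, using that $\lceil a\rceil\le\max_v\lceil a_v\rceil$ whenever $a\le\max_v a_v$.

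The crux is the quantitative fan argument in the second paragraph. Under Reed's \emph{global} bound there is a single large clique available to feed slack uniformly into the chain analysis, whereas under the \emph{local} bound the slack has to be harvested edge by edge as ``half the smaller endpoint degree'', which forces a much more careful choice of fan and a tighter control of how recolourings propagate; in particular the argument must not degrade when many low-degree vertices lie along the fan, and the case in which $\omega_{L(H)}$ is realized by a multi-edge triangle rather than by a high-degree vertex needs to be treated separately. Finally, carrying out each recolouring step explicitly and bounding the number and cost of the chain reversals converts the proof into the stated algorithms, running in $O(n^2)$ time for line graphs and $O(n^3m^2)$ for quasi-line graphs.
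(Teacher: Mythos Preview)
Your overall plan matches the paper's two-stage architecture, but both stages have genuine gaps as written.

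For the line-graph case, your sketch stops short of the actual mechanism. The paper does not fall back on a dense submultigraph and the fractional bound; instead it first shows (Lemma~\ref{lem:algo3}) that when $k\ge\gamma_l'(G)$ any maximal fan of size $\ge 3$ already allows the colouring to be completed. One is therefore forced into the situation where every maximal fan has size exactly two, and the heart of the proof (Lemma~\ref{lem:main}) threads these size-two fans into an odd cycle $v_0,\ldots,v_{j-1}$ in $H$ whose edges alternate between two colours $\alpha_0,\alpha_1$. The case $j=3$ is precisely where the triangle term $t(uv)$ in the definition of $\gamma_l'$ is used in an essential way; the case $j\ge 5$ is a counting argument on the edges of the cycle whose colour is missing somewhere along it. Your ``either an outright contradiction or a submultigraph $K$ with $|E(K)|>k\lfloor|V(K)|/2\rfloor$'' does not by itself locate this cycle-of-fans structure or isolate the $j=3$ triangle case, and it is not clear your density bound is even reachable from the fan data.

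For the passage to quasi-line graphs, your reduction is incomplete. After eliminating nonlinear homogeneous pairs and clique cutsets, the structure theorem yields either a circular interval graph or a \emph{composition of linear interval strips}; the latter class strictly contains line graphs, so it does \emph{not} suffice to treat only line graphs and (fuzzy) circular interval graphs. The paper handles compositions of strips via canonical interval $2$-joins (Lemma~\ref{lem:quasilinemce2}): given an $l$-colouring of $G_1$ with $l\ge\gamma_l^j(H_2)$, one extends it across the strip through a six-case analysis that carefully controls how colours on $X_1$ and $Y_1$ interact. This lemma, already proved with the local bound in \cite{kingthesis}, is exactly what makes ``following \cite{kingr08} line for line'' work here, but it is a substantive ingredient you have not mentioned.
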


Furthermore our proofs yield polytime algorithms for constructing a proper colouring achieving the bound of the theorem: $O(n^2)$ time for a line graph on $n$ vertices, and $O(n^3m^2)$ time for a quasi-line graph on $n$ vertices and $m$ edges.

Given a multigraph $G$, the {\em line graph of $G$}, denoted $L(G)$, is the graph with vertex set $V(L(G))=E(G)$ in which two vertices of $L(G)$ are adjacent precisely if their corresponding edges in $H$ share an endpoint.  We say that a graph $G'$ is a {\em line graph} if for some multigraph $G$, $L(G)$ is isomorphic to $G'$.  A graph $G$ is {\em quasi-line} if every vertex $v$ is {\em bisimplicial}, i.e.\ the neighbourhood of $v$ induces the complement of a bipartite graph.  A graph $G$ is {\em claw-free} if it contains no induced $K_{1,3}$.  Observe that every line graph is quasi-line and every quasi-line graph is claw-free.

\section{Proving the local strengthening for line graphs}

In order to prove Conjecture \ref{con:local} for line graphs, we prove an equivalent statement in the setting of edge colourings of multigraphs.  Given distinct adjacent vertices $u$ and $v$ in a multigraph $G$, we let $\mu_G(uv)$ denote the number of edges between $u$ and $v$.  We let $t_G(uv)$ denote the maximum, over all vertices $w\notin \{u,v\}$, of the number of edges with both endpoints in $\{u,v,w\}$.  That is,
$$t_G(uv) := \max_{w\in N(u)\cap N(v)}\left(\mu_G(uv)+\mu_G(uw)+\mu_G(vw)\right).$$
We omit the subscripts when the multigraph in question is clear.

Observe that given an edge $e$ in $G$ with endpoints $u$ and $v$, the degree of $uv$ in $L(G)$ is $d(u)+d(v)-\mu(uv)-1$.  And since any clique in $L(G)$ containing $e$ comes from the edges incident to $u$, the edges incident to $v$, or the edges in a triangle containing $u$ and $v$, we can see that $\omega(v)$ in $L(G)$ is equal to $\max \{ d(u), d(v), t(uv) \}$.  Therefore we prove the following theorem, which, aside from the algorithmic claim, is equivalent to proving Conjecture \ref{con:local} for line graphs:

\begin{theorem}\label{thm:main}
Let $G$ be a multigraph on $m$ edges, and let
\begin{multline}\label{eq:main}
\gamma_l'(G) := \max_{uv\in E(G)} \left\lceil  \max \left\{ d(u)+\tfrac 12(d(v)-\mu(vu)),\ d(v)+\tfrac 12(d(u)-\mu(uv)),\right.\right. \\  \left.\left. \tfrac 12( d(u)+d(v)-\mu_G(uv)+t(uv)   )  \right\}\right\rceil.
\end{multline}
Then $\chi'(G)\leq \gamma_l'(G)$, and we can find a $\gamma_l'(G)$-edge-colouring of $G$ in $O(m^2)$ time. 
\end{theorem}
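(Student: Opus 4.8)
The plan is to prove the bound by induction on the number of edges $m$, with the algorithm coming from making the induction constructive. So suppose $G$ minimises $m$ subject to $\chi'(G)>\gamma_l'(G)$. First I would note that $\gamma_l'$ is monotone under edge deletion --- removing an edge $xy$ lowers $d(x),d(y),\mu(xy)$ by one and cannot raise any $t(\cdot)$, since deleting an edge destroys no triangle edges --- so $\gamma_l'(G-e)\le\gamma_l'(G)$ for every edge $e$; by minimality $\chi'(G-e)\le\gamma_l'(G)$, while $\chi'(G)\le\chi'(G-e)+1$ then forces $\chi'(G)=\gamma_l'(G)+1$. Hence $G$ is connected and edge-critical, so every edge $uv$ has $d(u)+d(v)-\mu(uv)-1\ge\gamma_l'(G)$ --- otherwise an optimal colouring of $G-uv$ leaves a free colour at $uv$ --- and combining this with the inequality $\gamma_l'(G)\ge d(u)+\tfrac12(d(v)-\mu(uv))$ yields $d(v)\ge\mu(uv)+2$ for every edge, so all degrees are at least $3$. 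Finally, if Vizing's multigraph bound satisfied $\Delta(G)+\mu(G)\le\gamma_l'(G)$ we would be done by Vizing's (algorithmic) theorem, so we may assume $G$ has a bundle heavy relative to $\gamma_l'(G)$.

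It is worth pinning down what must be rounded. Applying Theorem~\ref{thm:frac} to $L(G)$ gives $\chi'_f(G)\le\max_{uv}\tfrac12\big(d_{L(G)}(uv)+1+\omega_{L(G)}(uv)\big)\le\gamma_l'(G)$, so Theorem~\ref{thm:main} asserts that this fractional bound survives an integral rounding; for multigraph edge colouring the obstructions to such roundings are dense odd subgraphs, the smallest of which are triangles, and a triangle $\{u,v,w\}$ contributes $\mu(uv)+\mu(uw)+\mu(vw)$ to the relevant density --- exactly the term $t(uv)$ in~(\ref{eq:main}). This motivates a case split on which of the three terms of~(\ref{eq:main}) realises $\gamma_l'(G)$ at the critical edge $e=uv$.

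In the \emph{degree-dominated} case --- $\gamma_l'(G)=\lceil d(u)+\tfrac12(d(v)-\mu(uv))\rceil$ or its mirror image --- the bound coincides with that proved for line graphs in \cite{kingrv07}, except that that argument controls only $\Delta(L(G))$ and $\omega(L(G))$, whereas here I must track the individual degrees $d(u),d(v)$ to recover the sharper per-edge value. I would therefore revisit the proof of \cite{kingrv07} and localise it: colour $G-e$ by induction, try to colour $e$, and if that fails the edges at $u$ and at $v$ between them exhaust all $\gamma_l'(G)$ colours; a Vizing fan rooted at the bundle $uv$ together with a Kempe-chain (Tashkinov-tree) recolouring then forces a substructure denser than $\gamma_l'(G)$ allows, contradicting~(\ref{eq:main}) with the refined count. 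In the \emph{triangle-dominated} case --- $\gamma_l'(G)=\lceil\tfrac12(d(u)+d(v)-\mu(uv)+t(uv))\rceil$, witnessed by a vertex $w$ --- the critical edge lies in a fat triangle $\{u,v,w\}$, and I would use the structure of fat triangles exactly as in the extension of Reed's conjecture to quasi-line and claw-free graphs \cite{kingthesis, kingr08}: peel off a balanced family of colour classes supported on $\{u,v,w\}$ that lowers $\Delta(G)$, $t(uv)$ and $\gamma_l'(G)$ in step, and recurse.

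For the algorithm, every step above is constructive: the greedy extensions and Vizing's theorem are algorithmic, and each fan or Kempe-chain recolouring touches $O(m)$ edges and runs in $O(m)$ time; since the induction handles one edge per phase over $O(m)$ phases, the total is $O(m^2)$, and feeding this into the quasi-line reduction of \cite{kingr08} gives the $O(n^3m^2)$ bound of Theorem~\ref{thm:ql}. The step I expect to be the real obstacle is the triangle-dominated case: arranging the fat-triangle decomposition so that it respects the \emph{per-edge} quantity $\gamma_l'$ rather than a global parameter, while still decreasing it monotonically, is the delicate point --- and it is exactly where $t(uv)$, the only part of~(\ref{eq:main}) absent from the pure degree bound, carries the argument. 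Keeping both the localisation and the recolouring within $O(m^2)$, rather than merely polynomial, is a secondary bookkeeping burden.
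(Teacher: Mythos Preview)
Your proposal has a genuine structural gap. The case split on which term of~(\ref{eq:main}) realises $\gamma_l'(G)$ at the critical edge does not match how the obstruction actually arises, and your ``triangle-dominated'' strategy is borrowed from the wrong setting.

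Concretely: having $\gamma_l'(G)$-coloured $G-e$, one builds a maximal Vizing fan at an endpoint of $e$. If the fan has size $\geq 3$, a disjointness count on missing colours, using only the degree terms $d(v_i)+\tfrac12(d(v_0)-\mu(v_0v_i))\leq k$ at \emph{each} fan edge, yields a contradiction --- this is the localisation you gesture at, and it does not require that $e$ itself achieve the maximum in~(\ref{eq:main}). The real difficulty, which your proposal does not address, is when every maximal fan has size exactly~$2$: the degree terms alone cannot exclude this. The paper's key idea here is to shift the uncoloured edge along a sequence of overlapping size-$2$ fans, alternating two colours, until the sequence closes into an odd cycle $v_0,\dots,v_{j-1}$. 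Only now does a case distinction appear, and it is on the \emph{length of this cycle}, not on which term of~(\ref{eq:main}) dominates: when $j=3$ the triangle term $t(v_0v_1)$ gives the contradiction, and when $j\geq 5$ a different count (bounding how many edges of each missing colour lie on the cycle) does.

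Your plan for the triangle-dominated case --- peeling colour classes off a fat triangle to decrease $\gamma_l'$ and recurse, as in the claw-free vertex-colouring reductions --- does not transfer. Removing a matching from $G$ need not lower $\gamma_l'(G)$: the maximum in~(\ref{eq:main}) may be attained at edges untouched by the matching, and even at $uv$ itself removing one edge from each side of the triangle drops $t(uv)$ by~$2$ but $d(u)+d(v)-\mu(uv)$ by only~$2$ as well, while possibly leaving the degree terms at neighbouring edges unchanged. There is no evident monotone descent. The fat-triangle machinery in \cite{kingthesis,kingr08} is a vertex-colouring reduction for quasi-line graphs applied \emph{after} the line-graph case is already settled; it is not the tool that proves the line-graph case.
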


The most intuitive approach to achieving this bound on the chromatic index involves assuming that $G$ is a minimum counterexample, then characterizing $\gamma_l'(G)$-edge-colourings of $G-e$ for an edge $e$.  We want an algorithmic result, so we will have to be a bit more careful to ensure that we can modify partial $\gamma_l'(G)$-edge-colourings efficiently until we find one that we can extend to a complete $\gamma_l'(G)$-edge-colouring of $G$.

We begin by defining, for a vertex $v$, a {\em fan hinged at $v$}.  Let $e$ be an edge incident to $v$, and let $v_1,\ldots, v_\ell$ be a set of distinct neighbours of $v$ with $e$ between $v$ and $v_1$.  Let $c:E\setminus \{e\} \rightarrow \{1,\ldots,k\}$ be a proper edge colouring of $G\setminus \{e\}$ for some fixed $k$.  Then $F = (e;c;v;v_1,\ldots, v_\ell)$ is a {\em fan} if for every $j$ such that $2\leq j \leq \ell$, there exists some $i$ less than $j$ such that some edge between $v$ and $v_j$ is assigned a colour that does not appear on any edge incident to $v_i$ (i.e.\ a colour {\em missing} at $v_i$).  We say that $F$ is {\em hinged at $v$}.  If there is no $u \notin \{ v,v_1,\ldots,v_\ell \}$ such that $F'=(e;c;v;v_1,\ldots,v_\ell,u)$ is a fan, we say that $F$ is a {\em maximal fan}.  The {\em size} of a fan refers to the number of neighbours of the hinge vertex contained in the fan (in this case, $\ell$).  These fans generalize Vizing's fans, originally used in the proof of Vizing's theorem \cite{vizing64}.  Given a partial $k$-edge-colouring of $G$ and a vertex $w$, we say that a colour is {\em incident to $w$} if the colour appears on an edge incident to $w$.  We use $\fc(w)$ to denote the set of colours incident to $w$, and we use $\bar\fc(w)$ to denote $[k] \setminus \fc(w)$.

Fans allow us to modify partial $k$-edge-colourings of a graph (specifically those with exactly one uncoloured edge).  We will show that if $k\geq \gamma_l'(G)$, then either every maximal fan has size $2$ or we can easily find a $k$-edge-colouring of $G$.  We first prove that we can construct a $k$-edge-colouring of $G$ from a partial $k$-edge-colouring of $G-e$ whenever we have a fan for which certain sets are not disjoint.

\begin{lemma}\label{lem:algo1}
For some edge $e$ in a multigraph $G$ and positive integer $k$, let $c$ be a $k$-edge-colouring of $G-e$.  If there is a fan $F = (e;c;v;v_1,\ldots,v_\ell)$ such that for some $j$, $\bar \fc(v)\cap \bar \fc(v_j)\neq \emptyset$, then we can find a $k$-edge-colouring of $G$ in $O(k+m)$ time.
\end{lemma}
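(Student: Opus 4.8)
The plan is to use the fan structure together with Vizing-style recolouring along alternating paths. First I would recall how a fan lets us "rotate" colours: for each $j$ with $2 \le j \le \ell$, there is an index $i < j$ and an edge $f_j$ between $v$ and $v_j$ whose colour is missing at $v_i$; following the chain of witnesses back to $i=1$ (where $e$ itself is the uncoloured edge between $v$ and $v_1$), one obtains a sequence of edges $e = f_1, f_2, \ldots, f_j$ at $v$ such that recolouring each $f_i$ with the colour of $f_{i+1}$ (for $i$ running along the chain) and finally giving the last edge of the chain a colour missing at $v$ yields a proper colouring — provided such a final colour exists. Concretely, if $\bar{\fc}(v) \cap \bar{\fc}(v_j) \ne \emptyset$ for some $j$, pick a colour $\alpha$ in this intersection; the goal is to rotate the fan so that $v_j$ becomes the terminal vertex of the chain and then colour the final edge $\alpha$, which is legal since $\alpha$ is missing at both $v$ and $v_j$.

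The key steps, in order, would be: (1) Identify, using the defining property of the fan, for each $j$ a specific witness index $p(j) < j$ and a specific edge $g_j$ between $v$ and $v_j$ whose colour is missing at $v_{p(j)}$; store these during the construction of the fan so no extra search is needed. (2) Given the target index $j$ with $\alpha \in \bar{\fc}(v) \cap \bar{\fc}(v_j)$, trace the chain $j, p(j), p(p(j)), \ldots, 1$ back to the root; this takes $O(\ell) = O(m)$ time. (3) Perform the rotation: walk the chain from the root outward, and for each consecutive pair reassign the edge incident to $v$ at the earlier vertex the colour currently on the edge at the later vertex, freeing up that colour at the earlier vertex exactly when it is needed — the fan condition guarantees each such reassignment keeps the colouring proper at the $v_i$ endpoints, and properness at $v$ is maintained because we are permuting colours among edges at $v$. (4) After the rotation, the edge between $v$ and $v_j$ (or $e$ itself, if $j=1$) is the unique uncoloured edge incident to $v$; colour it $\alpha$. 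Since $\alpha \notin \fc(v)$ after the rotation and $\alpha \notin \fc(v_j)$, this is a proper $k$-edge-colouring of all of $G$. The whole procedure touches $O(m)$ edges and needs $O(k)$ time to, e.g., compute $\bar{\fc}(v)$ as a boolean array, giving the claimed $O(k+m)$ bound.

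The case $j=1$ deserves a separate sentence: then $\bar{\fc}(v) \cap \bar{\fc}(v_1) \ne \emptyset$ means some colour $\alpha$ is missing at both endpoints of $e$, so we simply set $c(e) = \alpha$ with no rotation at all.

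The main obstacle I anticipate is making the rotation argument airtight — specifically, verifying that as colours are shifted inward along the chain, no conflict is created at an intermediate vertex $v_i$ that sits on the chain, because the witness property only tells us that $g_j$'s colour is missing at $v_{p(j)}$, not that the *rotated* colour is. The careful version of step (3) has to process the chain in the correct order (from the hinge $e$ outward toward $v_j$) so that at the moment we recolour the edge at $v_i$, the new colour assigned to it is precisely the one that was verified missing at the vertex further along the chain, and the edge that *vacates* that colour is recoloured only afterward. Getting this ordering right, and confirming that the stored witness edges $g_j$ are exactly the ones being recoloured, is the delicate bookkeeping at the heart of the proof; everything else is routine.
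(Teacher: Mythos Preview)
Your proposal is correct and follows essentially the same Vizing-style fan rotation as the paper: build the witness function $p$ (the paper calls it $f$), trace the chain from $j$ back to $1$, shift colours along the chain so that $e_j$ becomes the uncoloured edge, then assign it a colour in $\bar\fc(v)\cap\bar\fc(v_j)$. The obstacle you flag at the end is not a real difficulty: since $p(i)<i$ strictly, the chain indices are distinct, so each $v_{i_s}$ has exactly one incident edge being recoloured (namely $e_{i_s}$), and its new colour $c(e_{i_{s+1}})$ is precisely what the witness property certifies was missing at $v_{i_s}$; nothing else at $v_{i_s}$ changes, and at $v$ the colours are merely permuted among incident edges, so $\bar\fc(v)$ is unchanged.
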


\begin{proof}
Let $j$ be the minimum index for which $\bar\fc(v)\cap \bar\fc(v_j)$ is nonempty.  If $j=1$ then the result is trivial, since we can extend $c$ to a proper $k$-edge-colouring of $G$.  Otherwise $j\geq 2$ and we can find $j$ in $O(m)$ time.  We define $e_1$ to be $e$.  We then construct a function $f:\{2,\ldots,\ell\} \rightarrow \{1,\ldots\ell-1\}$ such that for each $i$, (1) $f(i)<i$ and (2) there is an edge $e_i$ between $v$ and $v_i$ such that $c(e_i)$ is missing at $v_{f(i)}$.  We can find this function in $O(k+m)$ time by building a list of the earliest $v_i$ at which each colour is missing, and computing $f$ for increasing values of $i$ starting at 2.  While doing so we also find the set of edges $\{e_i\}_{i=2}^\ell$.

We construct a $k$-edge-colouring $c_j$ of $G-e_j$ from $c$ by shifting the colour $c(e_j)$ from $e_j$ to $e_{f(j)}$, shifting the colour $c(e_{f(j)})$ from $e_{f(j)}$ to $e_{f(f(j))}$, and so on, until we shift a colour to $e$.  We now have a $k$-edge-colouring $c_j$ of $G-e_j$ such that some colour is missing at both $v$ and $v_j$.  We can therefore extend $c_j$ to a proper $k$-edge-colouring of $G$ in $O(k+m)$ time.
\end{proof}

\begin{lemma}\label{lem:algo2}
For some edge $e$ in a multigraph $G$ and positive integer $k$, let $c$ be a $k$-edge-colouring of $G-e$.  If there is a fan $F = (e;c;v;v_1,\ldots,v_\ell)$ such that for some $i$ and $j$ satisfying $1\leq i<j\leq \ell$, $\bar \fc(v_i)\cap \bar \fc(v_j)\neq \emptyset$, then we can find $v_i$ and $v_j$ in $O(k+m)$ time, and we can find a $k$-edge-colouring of $G$ in $O(k+m)$ time.
\end{lemma}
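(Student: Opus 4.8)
The plan is to massage the colouring $c$, by a single Kempe (alternating-path) exchange, into one to which Lemma~\ref{lem:algo1} applies. First I would scan the fan, building for each colour the list of fan vertices at which it is missing; this costs $O(k+m)$, and since the hypothesis promises a colour missing at two fan vertices it hands me a pair $v_i,v_j$ with $i<j$ and a colour $\gamma\in\bar\fc(v_i)\cap\bar\fc(v_j)$. If the scan also exhibits a fan vertex $v_s$ with $\bar\fc(v)\cap\bar\fc(v_s)\neq\emptyset$, then Lemma~\ref{lem:algo1} already produces a $k$-edge-colouring of $G$ and I am done; so assume not. Then $\gamma\in\fc(v)$, so there is a unique edge $f$ at $v$ coloured $\gamma$; and fixing a colour $\alpha\in\bar\fc(v)$, our assumption forces $\alpha$ to be present at both $v_i$ and $v_j$.

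Next I would look at the subgraph $H\subseteq G-e$ formed by the edges coloured $\alpha$ or $\gamma$. Every vertex has degree at most $2$ in $H$, and $v$, $v_i$, $v_j$ each have degree exactly $1$ in $H$: $v$ because of $f$ (it has no $\alpha$-edge), $v_i$ and $v_j$ because of their unique $\alpha$-edges (they have no $\gamma$-edge). So each of these three distinct vertices is an endpoint of a path component of $H$; as a path has only two endpoints, the path $P$ through $v$ omits at least one of $v_i,v_j$, say $v_t$. Interchanging $\alpha$ and $\gamma$ along $P$ gives a proper $k$-edge-colouring $c'$ of $G-e$ in which $\gamma$ is missing at $v$ — its only $\gamma$-edge, $f$, is now coloured $\alpha$ — and still missing at $v_t$, which lies outside $P$. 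The remaining task is to exhibit, under $c'$, a fan hinged at $v$ with uncoloured edge $e$ that contains a vertex missing $\gamma$, and then to invoke Lemma~\ref{lem:algo1}.

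The step I expect to be the main obstacle is checking that the exchange has not wrecked the fan I need. The key structural observations are that the only edge at $v$ whose colour changed is $f$, that the only vertices whose missing sets changed are the two endpoints of $P$, and that if $f$ joins $v$ to a fan vertex $v_p$ then $v_p$ has both an $\alpha$-edge (since $\alpha\notin\bar\fc(v_p)$) and the $\gamma$-edge $f$, hence is an interior vertex of $P$ with unchanged missing set. A short case analysis then finishes: if $f$ is not one of the edges witnessing the fan property of $F$ (in particular if $f$ does not join $v$ to a fan vertex), then $F=(e;c';v;v_1,\dots,v_\ell)$ is still a fan and contains $v_t$, and Lemma~\ref{lem:algo1} applied to the pair $v,v_t$ wins. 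Otherwise $f=vv_p$ is the witnessing edge of $v_p$ for some $p\ge2$; let $v_q$, with $q<p$ and $\gamma\in\bar\fc(v_q)$, be the fan vertex that witnessed $v_p$ in $F$. One checks that the prefix $F^-=(e;c';v;v_1,\dots,v_{p-1})$ is still a fan under $c'$ and that $v_q$ still misses $\gamma$ under $c'$: the only way the latter could fail is $v_q$ being the far endpoint $z$ of $P$, but then $\gamma\in\bar\fc(z)$ forces $z$'s edge of $P$ to have been $\alpha$-coloured, so after the exchange $\alpha$ is missing at $z=v_q$ with $q<p$, whence $f$ (now $\alpha$-coloured) still witnesses $v_p$ and $F$ survives after all, reducing to the previous case. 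Thus in every case I get a fan containing a vertex that misses $\gamma$, together with $\gamma\in\bar\fc(v)$, and Lemma~\ref{lem:algo1} delivers a $k$-edge-colouring of $G$.

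For the running time: the fan scan that produces $v_i$, $v_j$ and $\gamma$ is $O(k+m)$; building $H$ and tracing the component $P$ through $v$ is $O(k+m)$; performing the exchange and identifying the surviving fan (or the prefix $F^-$) and the target vertex ($v_t$ or $v_q$) is $O(m)$; and the concluding call to Lemma~\ref{lem:algo1} is $O(k+m)$. Hence $v_i$, $v_j$ and a $k$-edge-colouring of $G$ are all found in $O(k+m)$ time, as claimed.
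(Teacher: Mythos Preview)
Your argument is correct and follows the same overall strategy as the paper: perform a single Kempe exchange on the $\{\alpha,\gamma\}$-subgraph to force $\bar\fc(v)$ to meet some $\bar\fc(v_s)$, then invoke Lemma~\ref{lem:algo1}.

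The one substantive difference is \emph{which} bichromatic component you recolour. The paper swaps on the component \emph{not} containing $v$ (the one through whichever of $v_i,v_j$ is separated from $v$); you instead swap on the path $P$ \emph{through} $v$. The paper's choice leaves every edge at $v$ untouched, so each witness edge $e_s$ keeps its colour and the fan persists with essentially no argument (the paper does not even spell this out). Your choice recolours the unique $\gamma$-edge $f$ at $v$, which may be a witness edge, and that is exactly what forces your case analysis on whether $f=e_p$ and whether the far endpoint $z$ of $P$ coincides with the earlier witness $v_q$. The analysis you give is sound --- in particular the observation that $v_p$, having both $\alpha$- and $\gamma$-edges, is interior to $P$, and the ``$v_q=z$'' sub-case where $f$ (now $\alpha$-coloured) re-witnesses $v_p$ via $v_q$ --- so the longer route does reach the goal.

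In short: same idea, dual execution. Swapping away from the hinge, as the paper does, buys you a shorter proof because the fan survives for free; swapping through the hinge, as you do, buys you nothing extra but costs a careful (and correctly handled) case check. Your explicit verification of fan survival is, if anything, more scrupulous than the paper's terse appeal to Lemma~\ref{lem:algo1}.
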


\begin{proof}
We can easily find $i$ and $j$ in $O(k+m)$ time if they exist.  Let $\alpha$ be a colour in $\bar \fc(v)$ and let $\beta$ be a colour in $\bar \fc(v_i) \cap \bar \fc(v_j)$.  Note that by Lemma \ref{lem:algo1}, we can assume $\alpha \in  \fc(v_i) \cap \fc(v_j)$ and $\beta\in \fc(v)$.

Let $G_{\alpha,\beta}$ be the subgraph of $G$ containing those edges coloured $\alpha$ or $\beta$.  Every component of $G_{\alpha,\beta}$ containing $v$, $v_i$, or $v_j$ is a path on $\geq 2$ vertices.  Thus either $v_i$ or $v_j$ is in a component of $G_{\alpha,\beta}$ not containing $v$.  Exchanging the colours $\alpha$ and $\beta$ on this component leaves us with a $k$-edge-colouring of $G-e$ in which either $\bar \fc(v)\cap \bar \fc(v_i)\neq \emptyset$ or $\bar \fc(v)\cap \bar \fc(v_j)\neq \emptyset$.  This allows us to apply Lemma \ref{lem:algo1} to find a $k$-edge-colouring of $G$.  We can easily do this work in $O(m)$ time.
\end{proof}

The previous two lemmas suggest that we can extend a colouring more easily when we have a large fan, so we now consider how we can extend a fan that is not maximal.  Given a fan $F = (e;c;v;v_1,\ldots,v_\ell)$, we use $d(F)$ to denote $d(v)+\sum_{i=1}^\ell d(v_i)$.

\begin{lemma}\label{lem:extend}
For some edge $e$ in a multigraph $G$ and integer $k\geq \Delta(G)$, let $c$ be a $k$-edge-colouring of $G-e$ and let $F$ be a fan.  Then we can extend $F$ to a maximal fan $F' = (e;c;v;v_1,v_2,\ldots,v_\ell)$ in $O(k+d(F'))$ time.
\end{lemma}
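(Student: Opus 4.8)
The plan is to build $F'$ greedily by repeatedly trying to append a new neighbour of $v$ to the current fan until no such neighbour exists, at which point the fan is maximal by definition. Concretely, I maintain the current fan $F^{(t)} = (e;c;v;v_1,\ldots,v_t)$ together with the set $M$ of colours that are ``available to extend'', i.e.\ colours missing at some $v_i$ with $i\le t$; by the definition of a fan, a vertex $u\notin\{v,v_1,\ldots,v_t\}$ can be appended exactly when some edge between $v$ and $u$ has a colour lying in $M$. So at each step I scan the edges incident to $v$ whose other endpoint is not yet in the fan, check whether any of their colours lies in $M$, and if so append that endpoint; if not, the fan is maximal and I stop.

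The key steps, in order, are: (1) initialize by computing $\fc(v_i)$ and hence $\bar\fc(v_i)$ for each $i=1,\ldots,\ell$ in the starting fan $F$, and form $M=\bigcup_i \bar\fc(v_i)$ as a Boolean array indexed by $[k]$; this costs $O(k + d(F))$. (2) Maintain a Boolean array marking which vertices are already in the fan, initialized in $O(d(F))$ time (we only ever mark neighbours of $v$, of which there are at most $d(v)$). (3) Repeat: walk through the edges at $v$; for each such edge $vw$ with $w$ not yet in the fan, test in $O(1)$ whether $c(vw)\in M$. If some such $w$ is found, append it: mark $w$ in the vertex array, compute $\fc(w)$ by scanning the $d(w)$ edges at $w$, and add each colour of $\bar\fc(w)$ to $M$. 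If no such $w$ is found during a full scan of $v$'s edges, declare the fan maximal and halt.

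For the running time, note that each successful append of a vertex $w$ costs $O(d(v))$ (to scan $v$'s edges and find $w$) plus $O(d(w))$ (to update $\fc(w)$ and $M$). Since the final fan has vertices $v_1,\ldots,v_\ell$ and $v$ is scanned once per append plus one final failed scan, the total is $O\!\big(\sum_{i=1}^{\ell}(d(v)+d(v_i))\big) = O\!\big(\ell\, d(v) + \sum_i d(v_i)\big)$. Because each $v_i$ is a distinct neighbour of $v$ we have $\ell \le d(v)$, but more usefully $\ell\,d(v)$ is not obviously $O(d(F'))$; here I use the hypothesis $k\ge\Delta(G)$ together with the fact that between each successful scan we find a \emph{new} vertex — so the number of scans is $\ell+1 \le d(v)+1$, and each scan costs $O(d(v))$, giving $O(d(v)^2)$. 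To get the claimed $O(k+d(F'))$ bound I instead amortize differently: rather than rescanning all of $v$'s edges each time, I process $v$'s incident edges in a single pass, keeping a pointer, and only revisit an edge $vw$ once its colour has entered $M$; since $M$ only grows and each colour enters at most once, each edge $vw$ is ``activated'' at most once, so the total work at $v$ over the whole process is $O(d(v) + k)$. Combined with the $O(d(v_i))$ per appended vertex, the total is $O(k + d(v) + \sum_i d(v_i)) = O(k + d(F'))$, as required.

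The main obstacle is precisely this amortization: a naive implementation that restarts the scan of $v$'s edges after every append is quadratic, so the argument must be organized so that each incident edge of $v$ triggers at most a constant amount of ``checking'' work beyond the one-time pass. The clean way is the event-driven formulation above: maintain, for each colour, the list of edges $vw$ of that colour with $w\notin F$; when a colour $\alpha$ is newly added to $M$ (which happens at most $k$ times total, once per colour), pull any one such edge and append its endpoint, moving the remaining same-coloured edges aside. One must also double-check the two easy invariants — that the appended $w$ is genuinely not yet in the fan (handled by the vertex array), and that the fan property is preserved at each step (immediate from the definition, since the colour witnessing the extension lies in $\bar\fc(v_i)$ for some $i<t+1$). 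Correctness of maximality at termination is automatic: we stop only when no incident edge of $v$ to an outside vertex carries a colour in $M$, which is exactly the negation of the fan-extension condition.
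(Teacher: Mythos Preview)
Your approach is essentially the paper's: greedily extend the fan, maintaining the set of colours that can witness a further extension. The paper packages this by keeping two sets, $\fc$ (colours on edges from $v$ to non-fan neighbours) and $\bar\fc_{F'}\subseteq\fc$ (those colours also missing at some fan vertex), and pulls the next vertex directly via an edge whose colour lies in $\bar\fc_{F'}$; your event-driven formulation with per-colour edge lists and a growing set $M$ is an equivalent implementation of the same idea.

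One point your write-up (and, to be fair, the paper's terse per-step claim) glosses over: the step ``add each colour of $\bar\fc(w)$ to $M$'' costs $|\bar\fc(w)|\approx k-d(w)$ if done by iterating over the complement, which is not $O(d(w))$ as you assert, and your amortization only bounds the number of \emph{successful} insertions into $M$, not the scan needed to find them. To make the $O(k+d(F'))$ bound airtight, restrict attention throughout to colours in $\fc$ (there are at most $d(v)$ of these), maintain $S=\fc\setminus\bar\fc_{F'}$ explicitly as a list, and after appending $v_{\ell+1}$ iterate over $S$, moving each $\alpha\notin\fc(v_{\ell+1})$ into $\bar\fc_{F'}$. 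Charging each touch of an element $\alpha\in S$ either to its unique removal from $S$ (total $\le d(v)$) or, when it survives, to an edge of colour $\alpha$ at $v_{\ell+1}$ (total $\le\sum_i d(v_i)$) gives overall work $O(d(F'))$, and the claimed bound follows.
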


\begin{proof}
We proceed by setting $F'=F$ and extending $F'$ until it is maximal.  To this end we maintain two colour sets.  The first, $\fc$, consists of those colours appearing incident to $v$ but not between $v$ and another vertex of $F'$.  The second, $\bar\fc_{F'}$, consists of those colours that are in $\fc$ and are missing at some fan vertex.  Clearly $F'$ is maximal if and only if $\bar\fc_{F'}=\emptyset$.  We can perform this initialization in $O(k+d(F))$ time by counting the number of times each colour in $\fc$ appears incident to a vertex of the fan.

Now suppose we have $F'=(e;c;v;v_1,v_2,\ldots,v_\ell)$, along with sets $\fc$ and $\bar\fc_{F'}$, which we may assume is not empty.  Take an edge incident to $v$ with a colour in $\bar\fc_F$; call its other endpoint $v_{\ell+1}$.  We now update $\fc$ by removing all colours appearing between $v$ and $v_{\ell+1}$.  We update $\bar\fc_{F'}$ by removing all colours appearing between $v$ and $v_{\ell+1}$, and adding all colours in $\fc \cap \bar\fc(v_{\ell+1})$.  Set $F'=(e;c;v;v_1,v_2,\ldots,v_{\ell+1})$.  We can perform this update in $d(v_{\ell+1})$ time; the lemma follows.
\end{proof}

We can now prove that if $k\geq \gamma_l'(G)$ and we have a maximal fan of size $1$ or at least $3$, we can find a $k$-edge-colouring of $G$ in $O(k+m)$ time.

\begin{lemma}\label{lem:single}
For some edge $e$ in a multigraph $G$ and positive integer $k \geq \gamma_l'(G)$, let $c$ be a $k$-edge-colouring of $G-e$ and let $F = (e;c;v;v_1)$ be a fan.  If $F$ is a maximal fan we can find a $k$-edge-colouring of $G$ in $O(k+m)$ time.
\end{lemma}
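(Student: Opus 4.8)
The goal is to show that a maximal fan of size $1$ forces the existence of a $k$-edge-colouring of $G$, where $k\geq\gamma_l'(G)$. The plan is to exploit maximality of $F=(e;c;v;v_1)$ directly: since no neighbour $u\notin\{v,v_1\}$ can be appended, every colour incident to $v$ but not between $v$ and $v_1$ must be incident to every neighbour of $v$ other than $v_1$; equivalently, $\bar\fc(v)\setminus(\text{colours between }v\text{ and }v_1)$ is disjoint from $\bar\fc(u)$ for each such $u$. First I would set up notation for $a=d(v)$, $b=d(v_1)$, $\mu=\mu(vu)$ (the edge $e$ lies between $v$ and $v_1$, so $e$ contributes to $\mu(vv_1)$), and count colours: $v$ sees $a-1$ colours on its coloured edges (one edge uncoloured), $v_1$ sees $b-1$ colours, so $|\bar\fc(v)|\geq k-(a-1)$ and $|\bar\fc(v_1)|\geq k-(b-1)$. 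By Lemma~\ref{lem:algo1} we may assume $\bar\fc(v)\cap\bar\fc(v_1)=\emptyset$, hence $k-(a-1)+k-(b-1)\leq k$, i.e. $k\leq a+b-2$; but this already contradicts $k\geq\gamma_l'(G)\geq d(u)+\tfrac12(d(v)-\mu(vu))$-type bounds only in the presence of a third vertex, so the real work is to produce that third vertex.

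The key step is to extract a common neighbour $w$ of $v$ and $v_1$ that carries a lot of multiplicity, using maximality. Concretely: let $\alpha\in\bar\fc(v)$ be a colour not appearing between $v$ and $v_1$ (such $\alpha$ exists since $|\bar\fc(v)|\geq k-a+1\geq 2$ while at most $\mu(vv_1)-1$ non-$\alpha$... — more carefully, if every colour of $\bar\fc(v)$ appeared between $v$ and $v_1$ then $\mu(vv_1)\geq|\bar\fc(v)|+1$, which one checks is incompatible with the bound $d(v)+\tfrac12(d(v_1)-\mu(v_1v))\leq k$). Since $F$ is maximal, $\alpha$ is incident to $v_1$; let $e'$ be the edge at $v_1$ coloured $\alpha$, with other endpoint $w$. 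I would then do a Kempe-chain swap: pick $\beta\in\bar\fc(v_1)$, look at $G_{\alpha,\beta}$, and argue that swapping on the appropriate component brings a colour missing at $v_1$ into $\bar\fc(v)$ unless $v$ and $v_1$ lie on the same $\alpha\beta$-path — in which case $w$ is forced to be a common neighbour of $v$ and $v_1$ and we recolour to build the fan $(e;c';v;v_1,w)$ of size $2$, contradicting maximality. This is the main obstacle: carefully tracking which Kempe swaps are available without creating a size-$2$ fan, and ensuring that the only surviving configuration forces enough edges into the triangle $\{v,v_1,w\}$.

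Once $w$ is pinned down as a common neighbour, the counting closes: the colours at $v$, at $v_1$, and at $w$ together with the missing-colour constraints (maximality says every colour in $\bar\fc(v)$ off the $vv_1$ edges hits $w$) give $|\bar\fc(v)|+|\bar\fc(w)|$-type inequalities which, combined with $\mu(vv_1)+\mu(vw)+\mu(v_1w)\leq t(vv_1)\leq t(uv)$ for the maximizing pair, contradict $k\geq\tfrac12(d(v)+d(v_1)-\mu(vv_1)+t(vv_1))$. I expect the bookkeeping to mirror the standard Vizing-fan argument: the three terms in the max defining $\gamma_l'$ correspond exactly to the three ways the fan can fail to extend (too many colours forced onto $v_1$, onto $v$, or onto a common neighbour $w$), so the proof is essentially a case analysis driven by which term is tight. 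The hard part is the middle step — guaranteeing that the recolouring producing $w$ does not itself already yield an extension via Lemma~\ref{lem:algo1} or Lemma~\ref{lem:algo2} in a way that makes $F$ non-maximal, so that every escape route is either a colouring of $G$ or the desired contradiction.
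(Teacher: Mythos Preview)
You have dramatically overcomplicated this lemma, and the source of the trouble is a misreading of the fan extension condition. To append a vertex $v_2$ to the fan $(e;c;v;v_1)$, you need an edge between $v$ and $v_2$ whose colour lies in $\bar\fc(v_1)$ --- a colour missing at the \emph{existing} fan vertex $v_1$, not at the candidate $v_2$. Your paraphrase (``every colour incident to $v$ but not between $v$ and $v_1$ must be incident to every neighbour of $v$ other than $v_1$'', and the equivalent statement involving $\bar\fc(u)$) has this backwards, and that inversion is what sends you into Kempe chains, common neighbours, and triangle counts.

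With the correct reading, the argument is three lines. If $\bar\fc(v)\cap\bar\fc(v_1)\neq\emptyset$, colour $e$ with a common missing colour and you are done. Otherwise $\bar\fc(v_1)\subseteq\fc(v)$; since $e$ is uncoloured and $k\geq\gamma_l'(G)\geq d(v_1)$, the set $\bar\fc(v_1)$ is nonempty, so pick any $\beta\in\bar\fc(v_1)$. Then $\beta$ appears on some edge at $v$, and its other endpoint cannot be $v_1$ (the colour $\beta$ is missing there), so call it $v_2$: now $(e;c;v;v_1,v_2)$ is a fan, contradicting maximality of $F$. That is exactly the paper's proof. None of the $\gamma_l'$ inequalities beyond $k\geq\Delta(G)$, and none of the recolouring machinery, are needed here; those tools are reserved for the genuinely hard cases (maximal fans of size $\geq 3$ in Lemma~\ref{lem:algo3}, and the cycle-of-fans argument in Lemma~\ref{lem:main}).
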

\begin{proof}
If $\bar\fc(v)\cap \bar\fc(v_1)$ is nonempty, then we can easily extend the colouring of $G-e$ to a $k$-edge-colouring of $G$.  So assume $\bar\fc(v) \cap \bar\fc(v_1)$ is empty.  Since $k\geq \gamma_l'(G)\geq 1$, $\bar\fc(v_1)$ is nonempty.  Therefore there is a colour in $\bar\fc(v_1)$ appearing on an edge incident to $v$ whose other endpoint, call it $v_2$, is not $v_1$.  Thus $(e;c;v;v_1,v_2)$ is a fan, contradicting the maximality of $F$.
\end{proof}

\begin{lemma}\label{lem:algo3}
For some edge $e$ in a multigraph $G$ and positive integer $k \geq \gamma_l'(G)$, let $c$ be a $k$-edge-colouring of $G-e$ and let $F=(e;c;v;v_1,v_2,\ldots,v_\ell)$ be a maximal fan with $\ell\geq 3$.  Then we can find a $k$-edge-colouring of $G$ in $O(k+m)$ time.
\end{lemma}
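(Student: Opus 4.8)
The plan is to reduce to the two situations already handled by Lemmas~\ref{lem:algo1} and~\ref{lem:algo2}. Since the total degree $d(v)+\sum_{i=1}^\ell d(v_i)$ of the fan is at most $2m$, in $O(k+m)$ time we can test whether the $\ell+1$ sets $\bar\fc(v),\bar\fc(v_1),\ldots,\bar\fc(v_\ell)$ are pairwise disjoint, and if not locate a witnessing pair. If $\bar\fc(v)\cap\bar\fc(v_j)\ne\emptyset$ for some $j$ we finish via Lemma~\ref{lem:algo1}; if $\bar\fc(v_i)\cap\bar\fc(v_j)\ne\emptyset$ for some $i<j$ we finish via Lemma~\ref{lem:algo2}. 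So it remains to rule out the case that all $\ell+1$ sets are pairwise disjoint; I would assume this and derive a contradiction from $k\ge\gamma_l'(G)$ and $\ell\ge3$.

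The key step, and the one I expect to be the main obstacle, is a structural consequence of the maximality of $F$: \emph{every colour missing at a fan leaf $v_i$ occurs on a coloured edge joining $v$ to some other fan leaf}. Indeed, if $\beta\in\bar\fc(v_i)$ then $\beta\notin\bar\fc(v)$ by the disjointness assumption, so $\beta$ lies on some edge $f\ne e$ incident to $v$, say with other end $w$. If $w$ were not among $v_1,\ldots,v_\ell$ then $(e;c;v;v_1,\ldots,v_\ell,w)$ would be a fan, with $f$ and $v_i$ witnessing the condition on the new leaf, contradicting maximality of $F$; so $w=v_j$ for some $j$, and $j\ne i$ because $\beta$ is missing at $v_i$.

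Granting this, the rest is bookkeeping. The coloured edges from $v$ to $\{v_1,\ldots,v_\ell\}$ number $\sum_{i=1}^\ell\mu(vv_i)-1$, the one uncoloured such edge being $e$, and since $c$ is proper they carry exactly that many colours. By the structural step and the disjointness assumption, $\bigcup_{i=1}^\ell\bar\fc(v_i)$ is a disjoint union of colour sets all contained in that set, so
$$\sum_{i=1}^\ell|\bar\fc(v_i)|\ \le\ \sum_{i=1}^\ell\mu(vv_i)-1 .$$
Because $e$ is incident to $v_1$ but to no $v_i$ with $i\ge2$, the left-hand side equals $\ell k-\sum_{i=1}^\ell d(v_i)+1$, which rearranges to
$$\ell k\ \le\ \sum_{i=1}^\ell\bigl(d(v_i)+\mu(vv_i)\bigr)-2 .$$

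To contradict this I would bound the right-hand side using $\gamma_l'(G)$. Applying the definition of $\gamma_l'(G)$ to the edge $vv_i$ gives $d(v_i)+\tfrac12\bigl(d(v)-\mu(vv_i)\bigr)\le\gamma_l'(G)\le k$, hence $d(v_i)+\mu(vv_i)\le k-\tfrac12 d(v)+\tfrac32\mu(vv_i)$; summing over $i$ and using $\sum_{i=1}^\ell\mu(vv_i)\le d(v)$ yields
$$\sum_{i=1}^\ell\bigl(d(v_i)+\mu(vv_i)\bigr)\ \le\ \ell k+\Bigl(\tfrac32-\tfrac\ell2\Bigr)d(v)\ \le\ \ell k ,$$
the last inequality because $\ell\ge3$. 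Combining with the previous display gives $\ell k\le\ell k-2$, a contradiction, so the assumed disjointness is impossible and we are always in one of the two cases dispatched by Lemmas~\ref{lem:algo1} and~\ref{lem:algo2}. Note that $\ell\ge3$ enters only through the sign of $\tfrac32-\tfrac\ell2$, which is exactly why maximal fans of size $2$ need separate treatment, and that the only genuinely delicate points are the maximality step and tracking the $\pm1$ corrections caused by the single uncoloured edge $e$.
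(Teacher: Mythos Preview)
Your proof is correct and follows essentially the same route as the paper's. The only cosmetic difference is in how the counting is packaged: you observe that every colour missing at some leaf lies on a coloured $v$--leaf edge and hence $\sum_{i\ge1}|\bar\fc(v_i)|\le\sum_i\mu(vv_i)-1$, whereas the paper observes the complementary fact that every edge from $v$ to a non-fan vertex carries a colour outside $\bigcup_{i\ge0}\bar\fc(v_i)$; both unpack to the identical inequality $\ell k+2-\sum_i\mu(vv_i)\le\sum_i d(v_i)$ and the contradiction is then derived in the same way from $d(v_i)+\tfrac12(d(v)-\mu(vv_i))\le k$ and $\ell\ge3$.
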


\begin{proof}
Let $v_0$ denote $v$ for ease of notation.  If the sets $\bar\fc(v_0), \bar\fc(v_1), \ldots, \bar\fc(v_\ell)$ are not all pairwise disjoint, then using Lemma \ref{lem:algo1} or Lemma \ref{lem:algo2} we can find a $k$-edge-colouring of $G$ in $O(m)$ time.  We can easily determine whether or not these sets are pairwise disjoint in $O(k+m)$ time.  Now assume they are all pairwise disjoint; we will exhibit a contradiction, which is enough to prove the lemma.

The number of missing colours at $v_i$, i.e.\ $|\bar \fc(v_i)|$, is $k-d(v_i)$ if $2\leq i\leq \ell$, and $k-d(v_i)+1$ if $i\in \{0,1\}$.  Since $F$ is maximal, any edge with one endpoint $v_0$ and the other endpoint outside $\{v_0,\ldots,v_\ell\}$ must have a colour not appearing in $\cup_{i=0}^\ell\bar\fc(v_i)$.  Therefore
\begin{equation}   \left( \sum_{i=0}^\ell k-d(v_i)  \right) + 2 + \left( d(v_0) - \sum_{i=1}^\ell \mu(v_0v_i)\right)  \ \ \leq  \ \  k
.\end{equation}
Thus
\begin{equation} \ell k + 2  - \sum_{i=1}^\ell \mu(v_0v_i)  \ \  \leq \ \ \sum_{i=1}^\ell d(v_i)
.\end{equation}
But since $k\geq \gamma_l'(G)$, (\ref{eq:main}) tells us that for all $i \in [\ell]$,
\begin{equation}\label{eq:2}
d(v_i) + \tfrac 12(d(v_0) - \mu(v_0v_i)) \ \ \leq \ \ k
\end{equation}
Thus substituting for $k$ tells us
\begin{equation*}
\sum_{i=1}^\ell \frac{d(v_0)+2d(v_i) -\mu(v_0v_i)}{2} + 2  - \sum_{i=1}^\ell \mu(v_0v_i) \ \  \leq \ \ \sum_{i=1}^\ell d(v_i).
\end{equation*}
So
\begin{eqnarray*}
2+\tfrac 12 \ell d(v_0) - \tfrac 32 \sum_{i=1}^\ell \mu(v_0v_i)  & \leq & 0\\
2 + \tfrac 12 \ell d(v_0)& \leq&\tfrac 32 \sum_{i=1}^\ell \mu(v_0v_i)\\
\tfrac \ell 2  d(v_0)& < &\tfrac 32 d(v_0).
\end{eqnarray*}
This is a contradiction, since $\ell \geq 3$.
\end{proof}

We are now ready to prove the main lemma of this section.

\begin{lemma}\label{lem:main}
For some edge $e_0$ in a multigraph $G$ and positive integer $k \geq \gamma_l'(G)$, let $c_0$ be a $k$-edge-colouring of $G-e$.  Then we can find a $k$-edge-colouring of $G$ in $O(k+m)$ time.
\end{lemma}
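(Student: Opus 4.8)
The plan is to use the preceding lemmas to reduce to the single case of a maximal fan of size exactly two, and then to rule that case out. Write $e_0=xy$. The one-vertex sequence $(e_0;c_0;x;y)$ is trivially a fan, so by Lemma~\ref{lem:extend} we can extend it in $O(k+m)$ time to a maximal fan $F$ hinged at $x$; if $F$ has size $1$ we finish by Lemma~\ref{lem:single}, and if $F$ has size at least $3$ we finish by Lemma~\ref{lem:algo3}, each in $O(k+m)$ time. Running the same procedure with $e_0$ viewed as an edge at $y$ produces a maximal fan $F'$ hinged at $y$, and again we finish unless $F'$ has size $2$. So we may assume $F=(e_0;c_0;x;y,z)$ and $F'=(e_0;c_0;y;x,z')$ both have size exactly $2$. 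Applying Lemmas~\ref{lem:algo1} and~\ref{lem:algo2} to $F$ and to $F'$, we may further assume that $\bar\fc(x),\bar\fc(y),\bar\fc(z)$ are pairwise disjoint and that $\bar\fc(x),\bar\fc(y),\bar\fc(z')$ are pairwise disjoint, since otherwise we again obtain a $k$-edge-colouring of $G$ in $O(k+m)$ time.

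It remains to show that this configuration cannot occur. The counting argument from the proof of Lemma~\ref{lem:algo3} still applies: maximality of $F$ forces every colour on an edge from $x$ to $V(G)\setminus\{x,y,z\}$ to avoid $\bar\fc(x)\cup\bar\fc(y)\cup\bar\fc(z)$, and comparing the number of such distinctly coloured edges, $d(x)-\mu(xy)-\mu(xz)$, with the size $3k-d(x)-d(y)-d(z)+2$ of the disjoint union of the three missing-colour sets gives
\begin{equation*}
d(y)+d(z)+\mu(xy)+\mu(xz)\ \geq\ 2k+2 ,
\end{equation*}
and symmetrically $d(x)+d(z')+\mu(xy)+\mu(yz')\geq 2k+2$. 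Together with the inequalities $d(x)+\tfrac12(d(y)-\mu(xy))\leq k$, $d(y)+\tfrac12(d(x)-\mu(xy))\leq k$, $d(z)+\tfrac12(d(x)-\mu(xz))\leq k$ and $d(z')+\tfrac12(d(y)-\mu(yz'))\leq k$ coming from~(\ref{eq:main}), this already forces $x$ to send almost all its edges into $\{y,z\}$ and $y$ almost all its edges into $\{x,z'\}$, though it does not yet produce a contradiction.

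To finish I would bring in a Kempe exchange. Pick $\alpha\in\bar\fc(x)$ and $\beta\in\bar\fc(y)$; pairwise disjointness gives $\alpha\in\fc(y)\cap\fc(z)$ and $\beta\in\fc(z)$, so the component of $G_{\alpha,\beta}$ through $y$ is a path with $y$ an endpoint, and if that path misses $x$ we swap colours on it, making $\alpha$ missing at both $x$ and $y$, and colour $e_0$. Hence the $\alpha\beta$-path from $y$ ends at $x$; analysing where $z$ — which carries both an $\alpha$-edge and a $\beta$-edge — sits relative to this path, together with the symmetric argument applied to $F'$, forces $z$ (and $z'$) to be a common neighbour of $x$ and $y$. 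The triangle on $\{x,y,z\}$ then gives $t(xy)\geq\mu(xy)+\mu(xz)+\mu(yz)$, so~(\ref{eq:main}) also supplies $\tfrac12(d(x)+d(y)-\mu(xy)+t(xy))\leq k$ and $d(z)+\tfrac12(d(y)-\mu(yz))\leq k$; feeding these into the two displayed counting inequalities yields the contradiction. The whole step costs $O(k+m)$: two calls to Lemma~\ref{lem:extend}, a constant number of invocations of Lemmas~\ref{lem:algo1}--\ref{lem:algo3} and~\ref{lem:single}, and a constant number of Kempe exchanges, each $O(k+m)$.

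The main obstacle is exactly this size-two case. Unlike the size-$\geq 3$ case settled in Lemma~\ref{lem:algo3}, degree/Vizing-type counting alone does not close it: one must use a Kempe exchange to expose a tight triangle through $x$ and $y$ and then invoke the triangle term $t(xy)$ of $\gamma_l'(G)$ — the very feature by which $\gamma_l'$ improves on the bound from Vizing's theorem. Arranging the Kempe-exchange bookkeeping carefully while keeping the whole step within $O(k+m)$ time is the delicate part.
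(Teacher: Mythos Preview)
Your reduction to the size-two case is correct, and in fact your $\alpha\beta$-Kempe chain is exactly the structure the paper builds: since every colour of $\bar\fc(y)$ appears on an $xz$-edge and every colour of $\bar\fc(x)$ appears on a $yz'$-edge, the $\alpha\beta$-path from $y$ to $x$ necessarily runs $y,z',\ldots,z,x$, and together with $e_0$ this is an odd cycle through $x$ and $y$. This is precisely the paper's ``cycle of fans'' $v_0,v_1,\ldots,v_{j-1}$, obtained there by iterating the recolouring step rather than by reading off a Kempe chain.

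The gap is your assertion that this analysis ``forces $z$ (and $z'$) to be a common neighbour of $x$ and $y$.'' That only follows when the $\alpha\beta$-path has exactly two edges, i.e.\ when $z=z'$ and the odd cycle is a triangle. Nothing in your argument rules out a longer path: take for instance $C_5$ with every edge of multiplicity $\ell$, where $\gamma_l'=\lceil 5\ell/2\rceil$ is tight; here $x=v_0$, $y=v_1$, $z=v_4$, $z'=v_2$, and $z$ is not adjacent to $y$ at all, so no triangle term $t(xy)$ is available. Your displayed inequalities together with the non-triangle terms of~(\ref{eq:main}) do not by themselves yield a contradiction in this situation (they only give things like $d(x)\geq 4$), and a single Kempe swap does not create a triangle.

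The paper splits into two cases according to the length $j$ of this odd cycle. For $j=3$ it uses the triangle term of $\gamma_l'$ essentially as you do. For $j\geq 5$ it needs a genuinely different counting argument: it shows that every colour missing at some cycle vertex contributes, on average, more than two edges to the sub-multigraph $H_A$ of cycle edges carrying such colours, and then averages around the cycle to find an index $i$ with $d(v_i)+\tfrac12\mu(v_{i+1}v_{i+2})>k$, contradicting~(\ref{eq:main}) via an edge $v_{i+1}v_{i+2}$ with $v_i\notin\{v_{i+1},v_{i+2}\}$. Your proposal is missing this $j\geq 5$ analysis entirely; the ``constant number of Kempe exchanges'' you invoke cannot substitute for it.
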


As we will show, this lemma easily implies Theorem \ref{thm:main}.  We approach this lemma by constructing a sequence of overlapping fans of size two until we can apply a previous lemma.  If we cannot do this, then our sequence results in a cycle in $G$ and a set of partial $k$-edge-colourings of $G$ with a very specific structure that leads us to a contradiction.

\begin{proof}
We postpone algorithmic considerations until the end of the proof.

Let $v_0$ and $v_1$ be the endpoints of $e_0$, and let $F_0=(e_0;c_0;v_1;v_0,u_1,\ldots,u_\ell)$ be a maximal fan.  If $|\{u_1,\ldots,u_\ell\}| \neq 1$ then we can apply Lemma \ref{lem:single} or Lemma \ref{lem:algo3}.  More generally, if at any time we find a fan of size three or more we can finish by applying Lemma \ref{lem:algo3}.  So assume $\{u_1,\ldots,u_\ell \}$ is a single vertex; call it $v_2$.

Let $\bar\fc_0$ denote the set of colours missing at $v_0$ in the partial colouring $c_0$, and take some colour $\alpha_0 \in \bar\fc_0$.  Note that if $\alpha_0$ does not appear on an edge between $v_1$ and $v_2$ then we can find a fan $(e_0;c_0;v_1;v_0,v_2,u)$ of size $3$ and apply Lemma \ref{lem:algo3} to complete the colouring.  So we can assume that $\alpha_0$ does appear on an edge between $v_1$ and $v_2$.

Let $e_1$ denote the edge between $v_1$ and $v_2$ given colour $\alpha_0$ in $c_0$.  We construct a new colouring $c_1$ of $G-e_1$ from $c_0$ by uncolouring $e_1$ and assigning $e_0$ colour $\alpha_0$.  Let $\bar \fc_1$ denote the set of colours missing at $v_1$ in the colouring $c_1$.  Now  let $F_1=(e_1;c_1;v_2;v_1,v_3)$ be a maximal fan.  As with $F_0$, we can assume that $F_1$ exists and is indeed maximal.  The vertex $v_3$ may or may not be the same as $v_0$.

Let $\alpha_1\in\bar\fc_1$ be a colour in $\bar\fc_1$.  Just as $\alpha_0$ appears between $v_1$ and $v_2$ in $c_0$, we can see that $\alpha_1$ appears between $v_2$ and $v_3$.  Now let $e_2$ be the edge between $v_2$ and $v_3$ having colour $\alpha_1$ in $c_1$.  We construct a colouring $c_2$ of $G-e_2$ from $c_1$ by uncolouring $e_2$ and assigning $e_1$ colour $\alpha_1$.  

We continue to construct a sequence of fans $F_i = (e_i,c_i;v_{i+1};v_i,v_{i+2})$ for $i=0,1,2,\ldots$ in this way, maintaining the property that $\alpha_{i+2}=\alpha_i$.  This is possible because when we construct $c_{i+1}$ from $c_i$, we make $\alpha_i$ available at $v_{i+2}$, so the set $\bar\fc_{i+2}$ (the set of colours missing at $v_{i+2}$ in the colouring $c_{i+2}$) always contains $\alpha_{i}$.  We continue constructing our sequence of fans until we reach some $j$ for which $v_j \in \{v_i\}_{i=0}^{j-1}$, which will inevitably happen if we never find a fan of size 3 or greater.  We claim that $v_j=v_0$ and $j$ is odd.  To see this, consider the original edge-colouring of $G-e_0$ and note that for $1\leq i\leq j-1$, $\alpha_{0}$ appears on an edge between $v_i$ and $v_{i+1}$ precisely if $i$ is odd, and $\alpha_1$ appears on an edge between $v_i$ and $v_{i+1}$ precisely if $i$ is even.  Thus since the edges of colour $\alpha_0$ form a matching, and so do the edges of colour $\alpha_1$, we indeed have $v_j=v_0$ and $j$ odd.  Furthermore $F_0=F_j$.  Let $C$ denote the cycle $v_0,v_1,\ldots,v_{j-1}$.  In each colouring, $\alpha_0$ and $\alpha_1$ both appear $(j-1)/2$ times on $C$, in a near-perfect matching.  Let $H$ be the sub-multigraph of $G$ consisting of those edges between $v_i$ and $v_{i+1}$ for $0\leq j\leq j-1$ (with indices modulo $j$).  Let $A$ be the set of colours missing on at least one vertex of $C$, and let $H_A$ be the sub-multigraph of $H$ consisting of $e_0$ and those edges receiving a colour in $A$ in $c_0$ (and therefore in any $c_i$).

Suppose $j=3$.  If some colour is missing on two vertices of $C$ in $c_0$, $c_1$, or $c_2$, we can easily find a $k$-edge-colouring of $G$ since any two vertices of $C$ are the endpoints of $e_0$, $e_1$, or $e_2$.  We know that every colour in $\bar\fc_0$ appears between $v_1$ and $v_2$, and every colour in $\bar\fc_1$ appears between $v_2$ and $v_3=v_0$.  Therefore $|E(H_A)|=|A|+1$.  Our construction tells us that every colour in $\bar\fc_0$ appears between $v_1$ and $v_2$, and every colour in $\bar\fc_1$ appears between $v_2$ and $v_3=v_0$.  Therefore
\begin{eqnarray*}
2\gamma_l'(G) &\geq& d_G(v_0)+d_G(v_1) +t_G(v_0v_1) -\mu_G(v_0v_1)\\
&=& d_{H_A}(v_0)+d_{H_A}(v_1)+2(k-|A|) + t_G(v_0v_1)  - \mu_G(v_0v_1)\\
&\geq& d_{H_A}(v_0)+d_{H_A}(v_1)+2(k-|A|) + t_{H_A}(v_0v_1)  - \mu_{H_A}(v_0v_1)\\
&\geq& 2|E(H_A)| + 2(k-|A|)\\
&>& 2|A|+2(k-|A|) = 2k
\end{eqnarray*}
This is a contradiction since $k\geq \gamma_l'(G)$.  We can therefore assume that $j\geq 5$.

Let $\beta$ be a colour in $A \setminus \{\alpha_0,\alpha_1\}$.  If $\beta$ is missing at two consecutive vertices $v_i$ and $v_{i+1}$ then we can easily extend $c_i$ to a $k$-edge-colouring of $G$.  Bearing in mind that each $F_i$ is a maximal fan, we claim that if $\beta$ is not missing at two consecutive vertices then either we can easily $k$-edge-colour $G$, or the number of edges coloured $\beta$ in $H_{A}$ is at least twice the number of vertices at which $\beta$ is missing in any $c_i$.

To prove this claim, first assume without loss of generality that $\beta\in \bar\fc_0$.  Since $\beta$ is not missing at $v_1$, $\beta$ appears on an edge between $v_1$ and $v_2$ for the same reason that $\alpha_0$ does.  Likewise, since $\beta$ is not missing at $v_{j-1}$, $\beta$ appears on an edge between $v_{j-1}$ and $v_{j-2}$.  Finally, suppose $\beta$ appears between $v_1$ and $v_2$, and is missing at $v_3$ in $c_0$.  Then let $e_\beta$ be the edge between $v_1$ and $v_2$ with colour $\beta$ in $c_0$.  We construct a colouring $c'_0$ from $c_0$ by giving $e_2$ colour $\beta$ and giving $e_\beta$ colour $\alpha_1$ (i.e.\ we swap the colours of $e_\beta$ and $e_2$).  Thus $c'_0$ is a $k$-edge-colouring of $G-e_0$ in which $\beta$ is missing at both $v_0$ and $v_1$.  We can therefore extend $G-e_0$ to a $k$-edge-colouring of $G$.  Thus if $\beta$ is missing at $v_{3}$ or $v_{j-3}$ we can easily $k$-edge-colour $G$.  We therefore have at least two edges of $H_A$ coloured $\beta$ for every vertex of $C$ at which $\beta$ is missing, and we do not double-count edges.  This proves the claim, and the analogous claim for any colour in $A$ also holds.

Now we have
\begin{equation}
\sum_{i=0}^{j-1}\mu_{H_A}(v_iv_{i+1}) = |E(H_A)|  \ > \ 2 \sum_{i=0}^{j-1}\left(k - d_{G}(v_i) \right).
\end{equation}
Therefore taking indices modulo $j$, we have
\begin{equation}
\sum_{i=0}^{j-1}\left( d_{G}(v_i)+\tfrac 12\mu_{H_A}(v_{i+1}v_{i+2}) \right) \ > \ jk.
\end{equation}
Therefore there exists some index $i$ for which 
\begin{eqnarray}
d_{G}(v_i)+\tfrac 12\mu_{H_A}(v_{i+1}v_{i+2})  &> & k.
\end{eqnarray}
Therefore
\begin{eqnarray}
k\geq d_{G}(v_i)+\tfrac 12\mu_{G}(v_{i+1}v_{i+2})  &>& k.
\end{eqnarray}
This is a contradiction, so we can indeed find a $k$-edge-colouring of $G$.  It remains to prove that we can do so in $O(k+m)$ time.

Given the colouring $c_i$, we can construct the fan $F_i = (e_i,c_i;v_{i+1};v_i,v_{i+2})$ and determine whether or not it is maximal in $O(k+d(F_i))$ time.  If it is not maximal, we can complete the $k$-edge-colouring of $G$ in $O(m)$ time; this will happen at most once throughout the entire process.  Therefore we will either complete the colouring or construct our cycle of fans $F_0,\ldots,F_{j-1}$ in $O(\sum_{i=0}^{j-1}(k+d(F_i)))$ time.  This is not the desired bound, so suppose there is an index $i$ for which $k> d(F_i)$.  In this case we certainly have two intersecting sets of available colours in $F_i$, so we can apply Lemma \ref{lem:algo1} or \ref{lem:algo2} when we arrive at $F_i$, and find the $k$-edge-colouring of $G$ in $O(k+m)$ time.  If no such $i$ exists, then $jk = O(\sum_{i=0}^{j-1}(d(F_i))) = O(m)$, and we indeed complete the construction of all fans in $O(k+m)$ time.

Since each $F_i$ is a maximal fan, in $c_0$ there must be some colour $\beta\notin \{\alpha_0,\alpha_1\}$ missing at two consecutive vertices $v_i$ and $v_{i+1}$, otherwise we reach a contradiction.  We can find this $\beta$ and $i$ by going around the cycle of fans and comparing $\bar\fc_i$ and $\bar\fc_{i+1}$, and since this is trivial if $|\bar\fc_i|+|\bar\fc_{i+1}| > d(v_i)+d(v_{i+1})$ we can find $\beta$ and $i$ in $O(k+m)$ time, after which it is easy to construct the $k$-edge-colouring of $G$ from $c_i$. 
\end{proof}

We now complete the proof of Theorem \ref{thm:main}.

\begin{proof}[Proof of Theorem \ref{thm:main}]
Order the edges of $G$ $e_1,\ldots,e_m$ arbitrarily and let $k=\gamma_l'(G)$, which we can easily compute in $O(nm)$ time.  For $i=0,\ldots,m$, let $G_i$ denote the subgraph of $G$ on edges $\{e_j \mid j\leq i\}$.  Since $G_0$ is empty it is vacuously $k$-edge-coloured.  Given a $k$-edge-colouring of $G_i$, we can find a $k$-edge-colouring of $G_{i+1}$ in $O(k+m)$ time by applying Lemma \ref{lem:main}.  Since $k=\gamma_l'(G)=O(m)$, the theorem follows.
\end{proof}

This gives us the following result for line graphs, since for any multigraph $G$ we have $|V(L(G))| = |E(G)|$:

\begin{theorem}
Given a line graph $G$ on $n$ vertices, we can find a proper colouring of $G$ using $\gamma_l(G)$ colours in $O(n^2)$ time.
\end{theorem}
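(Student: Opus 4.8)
The plan is to reduce the statement about proper colourings of line graphs directly to the edge-colouring statement of Theorem~\ref{thm:main}. Given a line graph $G$ on $n$ vertices, we first invoke Whitney's theorem (or Roussopoulos' linear-time recognition algorithm) to find a multigraph $H$ with $L(H)\cong G$; note that $|E(H)| = |V(L(H))| = n$, so $H$ has exactly $m := n$ edges. A proper colouring of $G = L(H)$ is exactly a proper edge-colouring of $H$, so $\chi(G) = \chi'(H)$.

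Next I would verify that the quantity $\gamma_l'(H)$ of Theorem~\ref{thm:main} coincides with $\gamma_l(G) := \max_{v\in V(G)}\lceil\tfrac12(d(v)+1+\omega(v))\rceil$. This is precisely the computation carried out informally in the paragraph preceding Theorem~\ref{thm:main}: for an edge $e = uv$ of $H$, viewed as a vertex of $L(H)$, we have $d_{L(H)}(e) = d_H(u)+d_H(v)-\mu_H(uv)-1$, and $\omega_{L(H)}(e) = \max\{d_H(u),\,d_H(v),\,t_H(uv)\}$. Plugging these into $\tfrac12(d(e)+1+\omega(e))$ and taking the maximum over the three cases for $\omega_{L(H)}(e)$ yields exactly the three terms inside the max in~(\ref{eq:main}); taking the maximum over all edges $e$ gives $\gamma_l'(H) = \gamma_l(G)$. (One should be slightly careful that the ceiling commutes with the maximum, which it does since $\lceil\cdot\rceil$ is monotone.)

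Finally, apply Theorem~\ref{thm:main} to $H$: it produces a $\gamma_l'(H)$-edge-colouring of $H$ in $O(m^2) = O(n^2)$ time. Translating this edge-colouring back to a vertex-colouring of $L(H) = G$ is immediate and takes $O(n)$ time, and it uses $\gamma_l'(H) = \gamma_l(G)$ colours. So the total running time is dominated by the $O(n^2)$ edge-colouring step (the recognition step being linear and the computation of $k = \gamma_l'(H)$ being $O(nm) = O(n^2)$), giving the claimed bound.

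The main obstacle is not really an obstacle here, since the structural work is all done: the only point requiring genuine care is the identification $\gamma_l'(H) = \gamma_l(G)$, i.e.\ confirming that the three-way maximum in~(\ref{eq:main}) is the faithful edge-colouring translation of $\lceil\tfrac12(d(v)+1+\omega(v))\rceil$ over all three possibilities for which clique through $e$ is largest (a clique of edges at $u$, a clique of edges at $v$, or a triangle through $u$ and $v$). Everything else is bookkeeping and a direct appeal to Theorem~\ref{thm:main}.
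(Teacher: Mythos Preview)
Your proposal is correct and follows essentially the same approach as the paper, which derives the theorem in one line from Theorem~\ref{thm:main} via the observation $|V(L(H))|=|E(H)|$; you simply spell out the details (recovering $H$, the identification $\gamma_l'(H)=\gamma_l(G)$, and the running-time accounting) that the paper leaves implicit or handles in the discussion preceding Theorem~\ref{thm:main}.
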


This is faster than the algorithm of King, Reed, and Vetta \cite{kingrv07} for $\gamma(G)$-colouring line graphs, which is given an improved complexity bound of $O(n^{5/2})$ in \cite{kingthesis}, \S4.2.3.

\section{Extending the result to quasi-line graphs}

We now leave the setting of edge colourings of multigraphs and consider vertex colourings of simple graphs.  As mentioned in the introduction, we can extend Conjecture \ref{con:local} from line graphs to quasi-line graphs using the same approach that King and Reed used to extend Conjecture \ref{con:reed} from line graphs to quasi-line graphs in \cite{kingr08}.  We do not require the full power of Chudnovsky and Seymour's structure theorem for quasi-line graphs \cite{clawfree7}.  Instead, we use a simpler decomposition theorem from \cite{cssurvey}.

\subsection{The structure of quasi-line graphs}

We wish to describe the structure of quasi-line graphs.  If a quasi-line graph does not contain a certain type of homogeneous pair of cliques, then it is either a circular interval graph or built as a generalization of a line graph -- where in a line graph we would replace each edge with a vertex, we now replace each edge with a linear interval graph.  We now describe this structure more formally.

\subsubsection{Linear and circular interval graphs}

A {\em linear interval graph} is a graph $G=(V,E)$ with a {\em linear interval representation}, which is a point on the real line for each vertex and a set of intervals, such that vertices $u$ and $v$ are adjacent in $G$ precisely if there is an interval containing both corresponding points on the real line.  If $X$ and $Y$ are specified cliques in $G$ consisting of the $|X|$ leftmost and $|Y|$ rightmost vertices (with respect to the real line) of $G$ respectively, we say that $X$ and $Y$ are {\em end-cliques} of $G$.  These cliques may be empty.

Accordingly, a {\em circular interval graph} is a graph with a {\em circular interval representation}, i.e.\ $|V|$ points on the unit circle and a set of intervals (arcs) on the unit circle such that two vertices of $G$ are adjacent precisely if some arc contains both corresponding points.  Circular interval graphs are the first of two fundamental types of quasi-line graph.  Deng, Hell, and Huang proved that we can identify and find a representation of a circular or linear interval graph in $O(m)$ time \cite{denghh96}.

\subsubsection{Compositions of linear interval strips}

We now describe the second fundamental type of quasi-line graph.

A {\em linear interval strip} $(S,X,Y)$ is a linear interval graph $S$ with specified end-cliques $X$ and $Y$.  We compose a set of strips as follows.  We begin with an underlying directed multigraph $H$, possibly with loops, and for every every edge $e$ of $H$ we take a linear interval strip $(S_e, X_e, Y_e)$.  For $v\in V(H)$ we define the {\em hub clique} $C_v$ as
$$C_v = \left( \bigcup\{X_e \mid e \textrm{ is an edge out of } v \}\right) \cup \left( \bigcup \{Y_e \mid e \textrm{ is an edge into } v \}\right).$$
We construct $G$ from the disjoint union of $\{ S_e \mid e \in E(H)\}$ by making each $C_v$ a clique; $G$ is then a {\em composition of linear interval strips} (see Figure \ref{fig:strip}).  Let $G_h$ denote the subgraph of $G$ induced on the union of all hub cliques.  That is,
$$G_h = G[\cup_{v\in V(H)} C_v] = G[\cup_{e\in E(H)} (X_e\cup Y_e)].$$

\begin{figure}
\begin{center}
\includegraphics[width=0.7\textwidth]{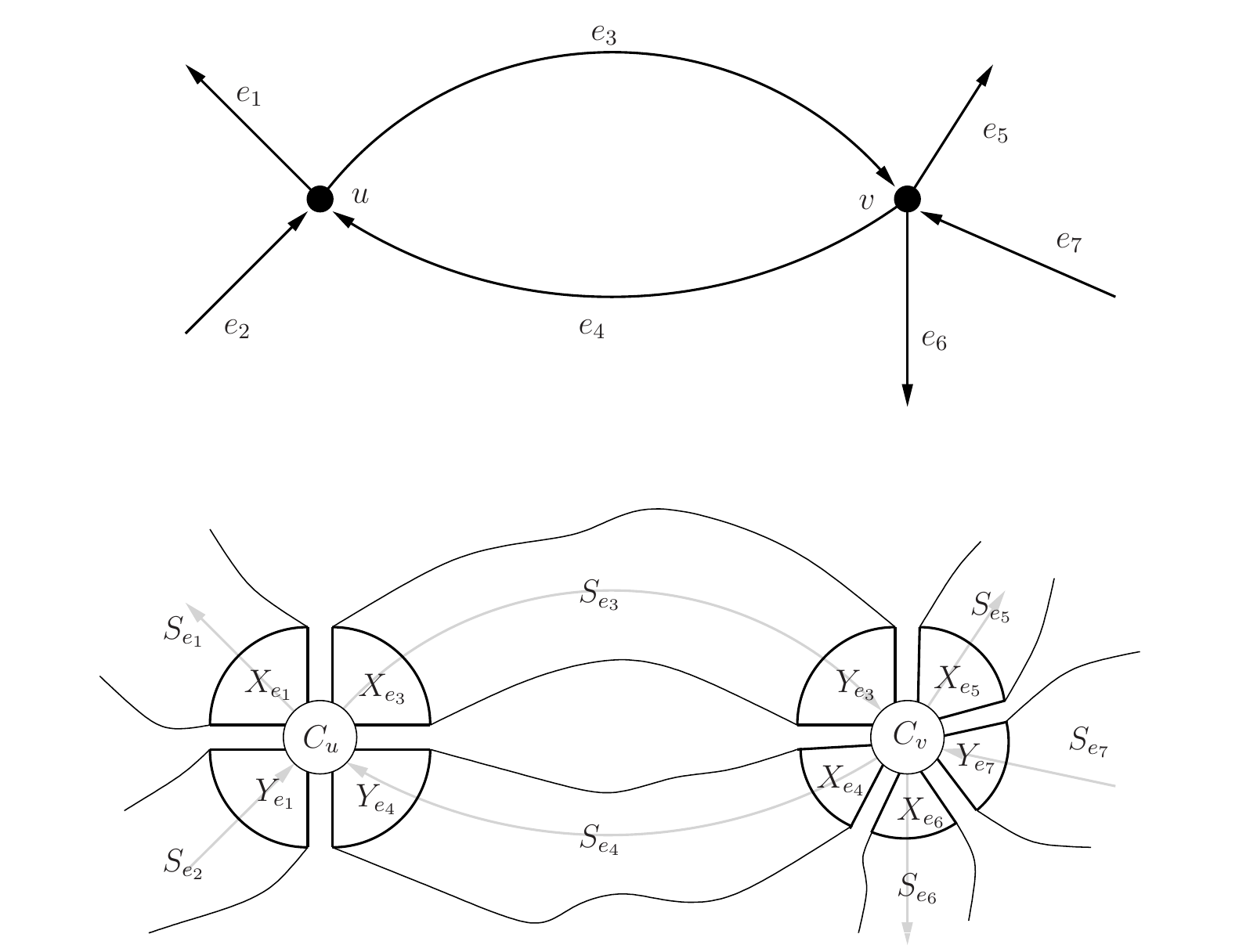}
\end{center}
\caption[A composition of strips]{We compose a set of strips $\{(S_e,X_e,Y_e) \mid e\in E(H)\}$ by joining them together on their end-cliques.  A hub clique $C_u$ will arise for each vertex $u \in V(H)$.}
\label{fig:strip}
\end{figure}

Compositions of linear interval strips generalize line graphs: note that if each $S_e$ satisfies $|S_e|=|X_e|=|Y_e|=1$ then $G = G_h = L(H)$.  

\subsubsection{Homogeneous pairs of cliques}

A pair of disjoint nonempty cliques $(A,B)$ in a graph is a {\em homogeneous pair of cliques} if $|A|+|B|\geq 3$, every vertex outside $A\cup B$ is adjacent to either all or none of $A$, and every vertex outside $A\cup B$ is adjacent to either all or none of $B$.  Furthermore $(A,B)$ is {\em nonlinear} if $G$ contains an induced $C_4$ in $A\cup B$ (this condition is equivalent to insisting that the subgraph of $G$ induced by $A\cup B$ is a linear interval graph).

\subsubsection{The structure theorem}

Chudnovsky and Seymour's structure theorem for quasi-line graphs \cite{cssurvey} tells us that any quasi-line graph not containing a clique cutset is made from the building blocks we just described.

\begin{theorem}
Any quasi-line graph containing no clique cutset and no nonlinear homogeneous pair of cliques is either a circular interval graph or a composition of linear interval strips.
\end{theorem}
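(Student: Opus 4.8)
The plan is to prove the structure theorem by a standard "minimal counterexample / decomposition" argument in the spirit of the Chudnovsky--Seymour quasi-line program, reducing everything to the two known building blocks via clique cutsets and homogeneous pairs. Let $G$ be a quasi-line graph containing no clique cutset and no nonlinear homogeneous pair of cliques. First I would note that it suffices to cite the classification of connected quasi-line graphs due to Chudnovsky and Seymour \cite{cssurvey}: every connected quasi-line graph is either a circular interval graph, or admits a nontrivial expression as a composition of linear interval strips possibly together with ``antiprismatic-type'' additions encoded by homogeneous pairs of cliques, or admits a clique cutset. The work is to check that ruling out clique cutsets and \emph{nonlinear} homogeneous pairs collapses the list to exactly the two stated outcomes. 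So the bulk of the argument is: (i) handle connectivity, (ii) handle the circular interval case, and (iii) show that when $G$ is not a circular interval graph, the strip decomposition can be taken with all strips \emph{linear interval} strips, using the absence of nonlinear homogeneous pairs to clean up each strip.

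For step (i), I would observe that $G$ is connected (a disconnected graph has a trivial clique cutset, namely the empty set, unless it has at most one vertex, which is a degenerate circular interval graph), and in fact $2$-connected, since a cutvertex is a clique cutset. For step (iii), the key local claim is: if $G$ is not a circular interval graph and has no clique cutset, then $G$ is a composition of strips $\{(S_e, X_e, Y_e)\}$ over some directed multigraph $H$, and for each strip either $S_e$ is already a linear interval graph, or the vertices of $S_e$ outside its end-cliques together with the end-cliques form a homogeneous pair of cliques in $G$ --- and the latter must be \emph{linear} by hypothesis, hence $S_e$ induces a linear interval graph, so $(S_e,X_e,Y_e)$ is (after possibly re-choosing the end-cliques) a genuine linear interval strip. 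Here one uses the definition given in the excerpt that a homogeneous pair $(A,B)$ is linear precisely when $G[A\cup B]$ is a linear interval graph; so ``no nonlinear homogeneous pair'' is exactly what is needed to force each non-trivial strip to be linear interval. I would spell out how a nonlinear strip $S_e$ yields a homogeneous pair: the end-cliques $X_e$ and $Y_e$, or appropriate sub-cliques thereof, are attached to the rest of $G$ only through the hub cliques $C_u, C_w$ where $e$ goes from $u$ to $w$, so the ``interior'' of the strip together with its end-cliques is homogeneous in $G$; a $C_4$ induced inside $S_e$ then witnesses nonlinearity of that pair.

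The main obstacle I expect is step (iii)'s bookkeeping: precisely matching the ``raw'' output of the Chudnovsky--Seymour classification --- which is stated in terms of a more elaborate family of trigraph/strip operations --- to the clean notion of ``composition of linear interval strips'' defined in this excerpt, and verifying that the degenerate cases (loops in $H$, empty end-cliques, strips that are single vertices giving ordinary line-graph behaviour, small graphs $G$ with few vertices) all fall correctly into one of the two named classes. In particular I would need to be careful that removing nonlinear homogeneous pairs does not accidentally require a clique cutset to ``reassemble'' things, and that a graph which is simultaneously small and special (e.g.\ a clique, a path's line graph, $C_5$) is correctly recognized as a circular interval graph or a trivial composition of linear interval strips. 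Beyond this case analysis, no deep new idea is needed: the theorem is essentially a specialization of the already-proven general quasi-line structure theorem, and the proof is a matter of invoking it and pruning the conclusion using the two excluded configurations.

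I would therefore organize the write-up as: \textbf{(1)} reduce to $G$ $2$-connected; \textbf{(2)} invoke the general structure theorem of \cite{cssurvey} to get either a circular interval representation (done), a clique cutset (excluded), a nonlinear homogeneous pair used in the decomposition (excluded), or a composition of linear interval strips; \textbf{(3)} in the remaining case, argue strip-by-strip that absence of nonlinear homogeneous pairs makes every strip a linear interval strip, concluding $G$ is a composition of linear interval strips. Each of (1) and (3) is short; (2) is a citation; the only real content is the homogeneous-pair extraction in (3), which I would prove as a short separate claim before assembling the theorem.
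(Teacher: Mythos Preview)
The paper does not actually prove this statement: it is simply quoted as Chudnovsky and Seymour's structure theorem and attributed to \cite{cssurvey}. So there is no ``paper's own proof'' to compare against; your plan of invoking the more general Chudnovsky--Seymour classification and then specializing is, in spirit, exactly what the paper does in one line by citation.

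That said, your step (iii) contains a real gap. You claim that if a strip $S_e$ fails to be a linear interval graph then its end-cliques $X_e,Y_e$ (or the interior together with the end-cliques) form a nonlinear homogeneous pair of cliques in $G$. This is not correct. A homogeneous pair $(A,B)$ requires $A$ and $B$ to be cliques and every vertex outside $A\cup B$ to be complete or anticomplete to each of $A,B$; but any interior vertex of the strip may be adjacent to some but not all of $X_e$ and of $Y_e$, so $(X_e,Y_e)$ is generally \emph{not} homogeneous in $G$, and the ``interior of the strip'' is not a clique at all. The actual mechanism in the Chudnovsky--Seymour theory is different: the general theorem produces compositions of \emph{fuzzy} linear interval strips, and the ``fuzziness'' in a strip is precisely a homogeneous pair of cliques $(A,B)$ lying inside the strip, which is then also homogeneous in $G$ because vertices outside the strip see only the end-cliques (and $A\cup B$ is disjoint from, or contained in, these in the right way). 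Forbidding nonlinear homogeneous pairs forces each fuzzy strip to be a genuine linear interval graph. If you want a self-contained derivation rather than a bare citation, you should locate the homogeneous pair inside the strip coming from the fuzzy structure, not at the end-cliques; as written, your extraction argument does not go through.
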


To prove Theorem \ref{thm:ql}, we first explain how to deal with circular interval graphs and nonlinear homogeneous pairs of cliques, then move on to considering how to decompose a composition of linear interval strips.

\subsection{Circular interval graphs}

We can easily prove Conjecture \ref{con:local} for circular interval graphs by combining previously known results.  Niessen and Kind proved that every circular interval graph $G$ satisfies $\chi(G) = \lceil \chi_f(G)\rceil$ \cite{niessenk00}, so Theorem \ref{thm:frac} immediately implies that Conjecture \ref{con:local} holds for circular interval graphs.  Furthermore Shih and Hsu \cite{shihh89} proved that we can optimally colour circular interval graphs in $O(n^{3/2})$ time, which gives us the following result:

\begin{lemma}\label{lem:cig}
Given a circular interval graph $G$ on $n$ vertices, we can $\gamma_l(G)$-colour $G$ in $O(n^{3/2})$ time.
\end{lemma}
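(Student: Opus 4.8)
The plan is to obtain Lemma~\ref{lem:cig} by stitching together the fractional bound of Theorem~\ref{thm:frac} with the two results quoted just above it: the Niessen--Kind identity $\chi(G)=\lceil\chi_f(G)\rceil$ for circular interval graphs \cite{niessenk00}, and the $O(n^{3/2})$ optimal colouring algorithm of Shih and Hsu \cite{shihh89}. Conjecture~\ref{con:local} for this class then comes essentially for free; the only real content is the (easy) algorithmic packaging.

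First I would establish the inequality $\chi(G)\le\gamma_l(G)$, where $\gamma_l(G)=\max_{v\in V(G)}\lceil\tfrac12(d(v)+1+\omega(v))\rceil$ is the bound of Conjecture~\ref{con:local}. By \cite{niessenk00}, $\chi(G)=\lceil\chi_f(G)\rceil$. By Theorem~\ref{thm:frac}, $\chi_f(G)\le\max_{v\in V(G)}\tfrac12(d(v)+1+\omega(v))$. The ceiling function is nondecreasing and, over a finite index set, commutes with maxima, i.e.\ $\big\lceil\max_i a_i\big\rceil=\max_i\lceil a_i\rceil$; applying $\lceil\cdot\rceil$ to both sides of the fractional bound therefore gives
$$\chi(G)=\big\lceil\chi_f(G)\big\rceil\ \le\ \Big\lceil\max_{v\in V(G)}\tfrac12(d(v)+1+\omega(v))\Big\rceil\ =\ \max_{v\in V(G)}\Big\lceil\tfrac12(d(v)+1+\omega(v))\Big\rceil\ =\ \gamma_l(G).$$
Next I would invoke \cite{shihh89} to produce, in $O(n^{3/2})$ time, a proper colouring of $G$ using exactly $\chi(G)$ colours; by the displayed inequality this is a colouring with at most $\gamma_l(G)$ colours, hence a $\gamma_l(G)$-colouring in the required sense. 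No separate computation of $\gamma_l(G)$, and no padding with empty colour classes, is needed, so the running time is $O(n^{3/2})$ as claimed.

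I do not expect a genuine obstacle here; the two points meriting a line of care are bookkeeping. One should note that the interchange $\lceil\max_i a_i\rceil=\max_i\lceil a_i\rceil$ is applied to the finite family indexed by $V(G)$, so it is valid. And one should check that no preprocessing blows the time bound: the lemma takes a circular interval graph as input, so a circular interval representation may be assumed in hand (in the intended application it is supplied by the decomposition, and in any case can be recovered in $O(m)$ time by \cite{denghh96}), after which the Shih--Hsu step dominates. The substance of the lemma lies entirely in combining Theorem~\ref{thm:frac}, \cite{niessenk00}, and \cite{shihh89}.
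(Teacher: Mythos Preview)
Your proposal is correct and mirrors the paper's own argument exactly: combine the Niessen--Kind identity $\chi(G)=\lceil\chi_f(G)\rceil$ with Theorem~\ref{thm:frac} to get $\chi(G)\le\gamma_l(G)$, then invoke Shih--Hsu's $O(n^{3/2})$ optimal colouring algorithm. The paper presents this in a single sentence before stating the lemma; your added care about commuting $\lceil\cdot\rceil$ with $\max$ and about preprocessing is sound but not strictly needed.
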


\subsection{Nonlinear homogeneous pairs of cliques}


There are many lemmas of varying generality that tell us we can easily deal with nonlinear homogeneous pairs of cliques; we use the version used by King and Reed \cite{kingr08} in their proof of Conjecture \ref{con:reed} for quasi-line graphs:

\begin{lemma}\label{lem:homo}
Let $G$ be a quasi-line graph on $n$ vertices containing a nonlinear homogeneous pair of cliques $(A,B)$.  In $O(n^{5/2})$ time we can find a proper subgraph $G'$ of $G$ such that $G'$ is quasi-line, $\chi(G')=\chi(G)$, and  given a $k$-colouring of $G'$ we can find a $k$-colouring of $G$ in $O(n^{5/2})$ time.
\end{lemma}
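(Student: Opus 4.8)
\textbf{Proof proposal for Lemma \ref{lem:homo}.}

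The plan is to reduce the size of a nonlinear homogeneous pair of cliques $(A,B)$ by deleting vertices, while controlling how the clique number and degrees change so that the local bound $\gamma_l$ is preserved. The key observation is that because $(A,B)$ is nonlinear, the subgraph induced on $A\cup B$ contains an induced $C_4$, so $A\cup B$ is not itself a clique; this gives us room to identify a vertex whose removal does not drop $\chi(G)$. More precisely, I would first study the structure of an optimal colouring restricted to $A\cup B$: since every vertex outside $A\cup B$ sees all or none of $A$ and all or none of $B$, the external constraints on $A$ and on $B$ are each ``uniform'', so a $k$-colouring of $G$ restricts to a pair of disjoint colour sets of sizes $|A|$ and $|B|$ drawn from $[k]$ subject only to: the colours on $A$ avoid the colours used on the common neighbourhood $N(A)\setminus B$, similarly for $B$, and colours on $A\cap$-neighbours shared structure. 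The heart of the argument is a counting/exchange lemma showing that if $|A|+|B|$ is large relative to what the neighbourhoods force, one can delete a vertex $v\in A\cup B$ (or contract the pair somewhat) to obtain $G'$ with $\chi(G')=\chi(G)$ and with $\gamma_l(G')\le \gamma_l(G)$, and conversely extend any $k$-colouring of $G'$ back to $G$ by a greedy/Hall-type argument on the uniform external constraints.

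Concretely, the steps I would carry out are: (1) Verify that nonlinearity of $(A,B)$ implies $A\cup B$ is not a clique, hence $\omega(G[A\cup B])\le |A|+|B|-1$, and pin down which vertices of $A\cup B$ lie in a maximum clique through any given vertex; this controls $\omega(v)$ for $v\in A\cup B$. (2) Show that if $|A|+|B|$ exceeds the quantity forced by the external neighbourhoods together with the internal clique number, then some vertex $v$ in $A\cup B$ can be deleted without decreasing $\chi$; the bound $|A|+|B|\ge 3$ guarantees we are never stuck with a trivial pair. (3) Set $G'=G-v$ (iterating if necessary, or making a single well-chosen deletion), check $G'$ is still quasi-line (deleting a vertex preserves the quasi-line property, and one checks $(A\setminus v,B)$ or $(A,B\setminus v)$ remains a legitimate, possibly now-linear, homogeneous pair or simply disappears), and check $\gamma_l(G')\le\gamma_l(G)$ by bounding $d_{G'}(u)$ and $\omega_{G'}(u)$ against their values in $G$ for every $u$ — the only vertices whose degree drops are neighbours of $v$, and for those the clique number also drops or the local bound is slack. (4) Establish the colour-extension direction: given a $k$-colouring of $G'$, the uncoloured vertex $v$ sees $d_G(v)<k$ coloured neighbours (since $k\ge\chi(G')$ and one argues $d_G(v)<\chi(G)\le k$ using that $v$ is in a homogeneous pair that is not a single large clique), so a colour is free for $v$; if several vertices were removed, extend them one at a time, or invoke a Hall-type matching between the removed vertices of $A\cup B$ and available colours, which is exactly where the uniform external structure makes Hall's condition automatic. (5) Do the running-time bookkeeping: finding the pair and the vertex to delete is $O(n^2)$-ish adjacency work, each colour extension is $O(n)$ or $O(n^{3/2})$ via a bipartite matching, and there are $O(n)$ iterations, giving the claimed $O(n^{5/2})$; since this lemma is quoted from \cite{kingr08}, I would in fact cite that proof and only sketch the adaptation needed to replace $\gamma$ by $\gamma_l$.

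The main obstacle I anticipate is step (3): verifying that $\gamma_l(G')\le\gamma_l(G)$ is more delicate than the analogous claim for Reed's original bound, because $\gamma_l$ is a maximum of a local quantity over vertices, and deleting $v$ changes $d$ and $\omega$ at many vertices simultaneously. One must ensure that no vertex $u$ of $G'$ ends up with $\lceil\tfrac12(d_{G'}(u)+1+\omega_{G'}(u))\rceil$ exceeding the old maximum — the danger is a vertex $u\in N(v)$ outside $A\cup B$ whose degree drops by one but whose $\omega$ stays the same, which would actually \emph{help}, versus a vertex where things could go wrong only if the deletion created a new large clique, which it cannot. So the real content is a careful case analysis on the location of $u$ relative to $A\cup B$ and $v$, together with the observation that the choice of $v$ in step (2) can be made so that $v$ lies in no maximum clique of $G$ that we need to preserve. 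The extension direction (step 4) is routine once the homogeneity is exploited, and the timing (step 5) is bookkeeping. I would therefore present this as: cite \cite{kingr08} for the construction of $G'$ and the $O(n^{5/2})$ algorithm, and supply only the short verification that the construction respects the local bound $\gamma_l$ in place of $\gamma$.
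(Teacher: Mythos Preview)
The paper does not prove this lemma at all: it is quoted verbatim from King and Reed \cite{kingr08} and used as a black box. So there is nothing to compare your sketch against, and the correct ``proof'' in context is simply a citation.

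More importantly, you have misread the statement. The lemma asserts only that $G'$ is a proper subgraph with $\chi(G')=\chi(G)$ and that colourings can be transferred; it says nothing about $\gamma$ or $\gamma_l$. Your entire step (3) --- the part you flag as the main obstacle --- is vacuous: for any subgraph $G'\subseteq G$ and any vertex $u\in V(G')$ we have $d_{G'}(u)\le d_G(u)$ and $\omega_{G'}(u)\le \omega_G(u)$, so $\gamma_l(G')\le\gamma_l(G)$ is automatic. No adaptation from $\gamma$ to $\gamma_l$ is needed; this is exactly why the paper can say ``It follows immediately that no minimum counterexample to Theorem \ref{thm:ql} contains a nonlinear homogeneous pair of cliques'' without further argument.

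Your sketch of the lemma itself also has gaps. The construction in \cite{kingr08} is not vertex deletion: one replaces the bipartite graph between $A$ and $B$ by a half-graph with the same clique number $\omega(G[A\cup B])$, which deletes edges and makes $G[A\cup B]$ a linear interval graph. The colour-extension direction is then a matching argument (Hopcroft--Karp, whence the $n^{5/2}$), not the greedy step you propose in (4). In particular, your claim there that ``$d_G(v)<\chi(G)$'' for a vertex $v$ of a nonlinear homogeneous pair is false in general, so the greedy extension fails. If you want to reconstruct the proof, the right invariant is that the minimum number of colours needed on $A\cup B$, given fixed colourings of the outside neighbourhoods, depends only on $|A|$, $|B|$, and $\omega(G[A\cup B])$ --- and the half-graph realizes the same value of this invariant while being a subgraph.
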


It follows immediately that no minimum counterexample to Theorem \ref{thm:ql} contains a nonlinear homogeneous pair of cliques.

\subsection{Decomposing: Clique cutsets}

Decomposing graphs on clique cutsets for the purpose of finding vertex colourings is straightforward and well understood.  

For any monotone bound on the chromatic number for a hereditary class of graphs, no minimum counterexample can contain a clique cutset, since we can simply ``paste together'' two partial colourings on a clique cutset.  Tarjan \cite{tarjan85} gave an $O(nm)$-time algorithm for constructing a clique cutset decomposition tree of any graph, and noted that given $k$-colourings of the leaves of this decomposition tree, we can construct a $k$-colouring of the original graph in $O(n^2)$ time.  Therefore if we can $\gamma_l(G)$-colour any quasi-line graph containing no clique cutset in $O(f(n,m))$ time for some function $f$, we can $\gamma_l(G)$-colour any quasi-line graph in $O(f(n,m)+nm)$ time.

If the multigraph $H$ contains a loop or a vertex of degree 1, then as long as $G$ is not a clique, it will contain a clique cutset.

\subsection{Decomposing: Canonical interval 2-joins}

A canonical interval 2-join is a composition by which a linear interval graph is attached to another graph.  Canonical interval 2-joins arise from compositions of strips, and can be viewed as a local decomposition rather than one that requires knowledge of a graph's global structure as a composition of strips.

Given four cliques $X_1$, $Y_1$, $X_2$, and $Y_2$, we say that $((V_1,X_1,Y_1),(V_2,X_2,Y_2))$ is  an {\em interval 2-join} if it satisfies the following:
\begin{itemize}
\item $V(G)$ can be partitioned into nonempty $V_1$ and $V_2$ with $X_1\cup Y_1\subseteq V_1$ and $X_2\cup Y_2\subseteq V_2$ such that for $v_1\in V_1$ and $v_2\in V_2$, $v_1v_2$ is an edge precisely if $\{v_1,v_2\}$ is in $X_1\cup X_2$ or $Y_1\cup Y_2$.
\item $G|V_2$ is a linear interval graph with end-cliques $X_2$ and $Y_2$.
\end{itemize}
If we also have $X_2$ and $Y_2$ disjoint, then we say $((V_1,X_1,Y_1),(V_2,X_2,Y_2))$ is a {\em canonical interval 2-join}.  The following decomposition theorem is a straightforward consequence of the structure theorem for quasi-line graphs:

\begin{theorem}\label{thm:decomposition}
Let $G$ be a quasi-line graph containing no nonlinear homogeneous pair of cliques.  Then one of the following holds.
\begin{itemize*}
\item $G$ is a line graph
\item $G$ is a circular interval graph
\item $G$ contains a clique cutset
\item $G$ admits a canonical interval 2-join.
\end{itemize*}
\end{theorem}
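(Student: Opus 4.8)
The plan is to deduce Theorem~\ref{thm:decomposition} from Chudnovsky and Seymour's structure theorem for quasi-line graphs, together with a short amount of bookkeeping needed to upgrade an interval 2-join to a \emph{canonical} one.

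First I would carry out the structural reduction. If $G$ has a clique cutset we are in the third case, so assume not. Since $G$ also has no nonlinear homogeneous pair of cliques by hypothesis, the structure theorem applies, and $G$ is either a circular interval graph --- the second case --- or a composition of linear interval strips $\{(S_e,X_e,Y_e)\}_{e\in E(H)}$ over some directed multigraph $H$. Assume the latter. As observed in the excerpt, if $H$ has a loop or a vertex of degree $1$ then $G$ is a clique (hence a line graph) or has a clique cutset; the latter is excluded, so in that case $G$ is a line graph and we are done. Hence we may assume $H$ is loopless of minimum degree at least $2$, and in particular $|E(H)|\geq 2$.

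Next I would normalize the strips so that each is either \emph{trivial} ($|V(S_e)|=|X_e|=|Y_e|=1$) or \emph{canonical} ($X_e\cap Y_e=\emptyset$). For this, whenever a strip $(S_f,X_f,Y_f)$ with $f$ running from $u$ to $v$ has a vertex $z\in X_f\cap Y_f$, I would replace $f$ in the composition by two parallel edges from $u$ to $v$, carrying respectively the strip $(S_f-z,\,X_f\setminus\{z\},\,Y_f\setminus\{z\})$ and the trivial strip $(\{z\},\{z\},\{z\})$. Since a vertex of a strip lies in a hub clique only through that strip's own end-cliques, this move leaves $G$ and all hub cliques unchanged, it keeps $H$ loopless of minimum degree at least $2$, and it strictly increases $|E(H)|$, which is bounded by $|V(G)|$; so after finitely many moves every strip is trivial or canonical. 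If at that point every strip is trivial, then $G=L(H')$ where $H'$ is the multigraph underlying $H$ (forgetting orientations), so $G$ is a line graph.

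Otherwise some strip $(S_f,X_f,Y_f)$, with $f$ from $u$ to $v$, is nontrivial with $X_f\cap Y_f=\emptyset$, and from it I would read off a canonical interval 2-join: put $V_2=V(S_f)$, $X_2=X_f$, $Y_2=Y_f$, $V_1=V(G)\setminus V(S_f)$, $X_1=C_u\setminus V(S_f)$ and $Y_1=C_v\setminus V(S_f)$. Then $V_1\neq\emptyset$ because $H$ has an edge other than $f$, whose strip is disjoint from $S_f$ and so lies in $V_1$; $X_1$ and $Y_1$ are cliques, being subsets of the hub cliques $C_u$ and $C_v$; $G|V_2=S_f$ is a linear interval graph with end-cliques $X_f,Y_f$ by construction; and the edges of $G$ between $V_1$ and $V_2$ are precisely those inside $X_1\cup X_2$ or inside $Y_1\cup Y_2$, since a vertex of $S_f$ can have a neighbour outside $S_f$ only through $C_u$ (if it lies in $X_f$) or through $C_v$ (if it lies in $Y_f$). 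This puts us in the fourth case. The step I expect to require the most care is the normalization in the third paragraph --- in particular, checking that pulling a vertex of $X_f\cap Y_f$ out of its strip really does leave $G$ unchanged --- since the construction of the 2-join itself and the initial reduction are both immediate from the definitions and the cited theorems.
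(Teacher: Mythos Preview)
Your argument is essentially correct and matches the route the paper intends: the paper itself offers no proof of this theorem, merely asserting that it is ``a straightforward consequence of the structure theorem for quasi-line graphs,'' and your derivation supplies exactly those details.

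One small slip to tidy up: as written, your normalization move also applies to trivial strips (a trivial strip has $X_f\cap Y_f=\{z\}\neq\emptyset$), which would spawn an empty strip and fail to terminate; you clearly mean to apply it only to strips that are neither trivial nor canonical, and with that said your termination bound $|E(H)|\le |V(G)|$ is fine. For the point you flagged as needing the most care, the clean justification is that $X_f\cap Y_f\neq\emptyset$ forces $|X_f|+|Y_f|>|V(S_f)|$ and hence $X_f\cup Y_f=V(S_f)$, so every vertex of $S_f$ already lies in $C_u\cup C_v$ and no adjacency of $z$ is lost when it is split off into its own trivial strip.
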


Therefore to prove Theorem \ref{thm:ql} it only remains to prove that a minimum counterexample cannot contain a canonical interval 2-join.  Before doing so we must give some notation and definitions.

We actually need to bound a refinement of $\gamma_l(G)$.  Given a canonical interval 2-join $((V_1,X_1, Y_1),(V_2,X_2, Y_2))$ in $G$ with an appropriate partitioning $V_1$ and $V_2$, let $G_1$ denote $G|V_1$, let $G_2$ denote $G|V_2$ and let $H_2$ denote $G|(V_2 \cup X_1 \cup Y_1)$.  For $v \in H_2$ we define $\omega'(v)$ as the size of the largest clique in $H_2$ containing $v$ and not intersecting both $X_1 \setminus Y_1$ and $Y_1 \setminus X_1$, and we define $\gamma_l^j(H_2)$ as $\max_{v\in H_2}\lceil d_G(v)+1+\omega'(v)\rceil$ (here the superscript $j$ denotes {\em join}).  Observe that $\gamma_l^j(H_2) \leq \gamma_l(G)$.  If $v \in X_1 \cup Y_1$, then $\omega'(v)$ is $|X_1|+|X_2|$, $|Y_1|+|Y_2|$, or $|X_1\cap Y_1|+ \omega(G|(X_2 \cup Y_2))$.

The following lemma is due to King and Reed and first appeared in \cite{kingthesis}; we include the proof for the sake of completeness.

\begin{lemma}\label{lem:quasilinemce2}
Let $G$ be a graph on $n$ vertices and suppose $G$ admits a canonical interval 2-join $((V_1,X_1, Y_1),(V_2,X_2, Y_2))$.  Then given a proper $l$-colouring of $G_1$ for any $l \geq \gamma_l^j(H_2)$, we can find a proper $l$-colouring of $G$ in $O(nm)$ time.
\end{lemma}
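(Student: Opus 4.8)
The plan is to take a proper $l$-colouring of $G_1$ and extend it across the linear interval part $G_2$, using the interval structure to recolour greedily from one end-clique to the other. Since the only edges between $V_1$ and $V_2$ run between $X_1$ and $X_2$ or between $Y_1$ and $Y_2$, the colouring of $G_1$ imposes constraints on $G_2$ only through the sets of colours used on $X_1$ and on $Y_1$. Write $a=|X_1|$, $b=|Y_1|$, and let $\alpha$ and $\beta$ be the colour sets appearing on $X_1$ and $Y_1$ respectively; note $|\alpha|=a$, $|\beta|=b$, and $\alpha$ and $\beta$ agree on the colours of $X_1\cap Y_1$. What must be shown is that $G_2$ (a linear interval graph with end-cliques $X_2$, $Y_2$, the two end-cliques being disjoint since the join is canonical) can be properly $l$-coloured so that $X_2$ avoids $\alpha$ and $Y_2$ avoids $\beta$; equivalently, one must colour $G_2$ with the additional constraint that some $a$ forbidden colours are barred from the leftmost clique and some $b$ forbidden colours are barred from the rightmost clique.

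The key steps, in order, are as follows. First I would set up the linear interval representation of $G_2$ from Deng--Hell--Huang, order the vertices $w_1,\dots,w_{|V_2|}$ left to right with $X_2$ the leftmost vertices and $Y_2$ the rightmost, and reformulate the problem as colouring $G_2$ together with two auxiliary cliques: a clique $\bar X$ of $a$ new vertices forced to the left of everything and pre-coloured with $\alpha$, and a clique $\bar Y$ of $b$ new vertices forced to the right and pre-coloured with $\beta$, where $\bar X$ is complete to $X_2$ and $\bar Y$ is complete to $Y_2$. This enlarged graph $G_2^+$ is again a linear interval graph, and an $l$-colouring of it extending the pre-colouring is exactly what we need; moreover one checks that $\gamma_l^j(H_2)$ was defined precisely so that $l\ge\gamma_l^j(H_2)$ controls the relevant clique sizes in $G_2^+$ — the three expressions $|X_1|+|X_2|$, $|Y_1|+|Y_2|$, $|X_1\cap Y_1|+\omega(G|(X_2\cup Y_2))$ are the candidate maximum clique sizes through a vertex of $X_1\cup Y_1$ in $H_2$, hence through a vertex of $\bar X\cup\bar Y$ in $G_2^+$. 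Second, I would colour $G_2^+$ greedily in the linear interval order starting from $\bar X$: at each vertex $w$, the already-coloured neighbours of $w$ form a clique (this is the crucial property of linear interval graphs — the neighbourhood of a vertex among earlier vertices is an interval, hence a clique), so the number of forbidden colours at $w$ is at most $\omega(w)-1\le l-1$, leaving a free colour. Third — and this is where care is needed — near the right end one must guarantee that $\bar Y$'s pre-colouring survives; the standard trick is to colour $G_2^+$ symmetrically: colour the ``interior'' vertices $V_2$ greedily but, when processing the vertices of $Y_2$, treat $\bar Y$ as already present so its colours are simply additional constraints. Since a vertex of $Y_2$ together with $\bar Y$ and its left-neighbours forms a clique of size at most $\gamma_l^j(H_2)\le l$ (this is exactly the $|Y_1|+|Y_2|$ and $|X_1\cap Y_1|+\omega(G|(X_2\cup Y_2))$ cases), a free colour remains. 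The bound $l\ge\gamma_l^j(H_2)$ is used exactly at the vertices adjacent to $\bar X$ or $\bar Y$; at all other vertices the weaker bound $l\ge\omega(G_2)$, which follows since $\omega(G_2)\le\omega(G)\le\gamma_l(G)\le\ldots$, suffices, but in fact $l\ge\gamma_l^j(H_2)$ covers everything.

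I expect the main obstacle to be the bookkeeping at the two ends simultaneously: a naive left-to-right greedy colouring handles the $X_2$ constraint for free but may exhaust colours at $Y_2$ because the forbidden set $\beta$ and the colours greedily assigned to the left neighbours of $Y_2$ could overlap too little. The resolution is to observe that the clique $X_1\cap Y_1$ is complete to both $X_2$ and $Y_2$, so the colours in $\alpha\cap\beta$ are forbidden on \emph{all} of $X_2\cup Y_2$ and in particular $|\alpha\cup\beta| = a+b-|X_1\cap Y_1|$; the worst case for colouring $Y_2$ is then governed precisely by $|X_1\cap Y_1|+\omega(G|(X_2\cup Y_2))$, which is one of the three terms in $\omega'$. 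Once this accounting is made explicit, the greedy colouring goes through. Finally, for the running time, computing the linear interval representation takes $O(m)$ by Deng--Hell--Huang, building $G_2^+$ and the constraint lists takes $O(nm)$, and the greedy pass is $O(nm)$, so the whole extension is $O(nm)$ as claimed; re-using the $G_1$-colouring on $X_1\cup Y_1$ to read off $\alpha$ and $\beta$ is $O(n)$.
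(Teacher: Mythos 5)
Your reduction of the lemma to a list-colouring problem on the linear interval graph $G_2$ — colour $G_2$ with $l$ colours so that $X_2$ avoids the colour set $\alpha$ of $X_1$ and $Y_2$ avoids the colour set $\beta$ of $Y_1$ — is a reasonable starting point, but the greedy argument you give to solve that problem has a genuine gap at the right end, and it is exactly the gap you flag as ``the main obstacle''; your proposed resolution does not close it. The assertion that ``a vertex of $Y_2$ together with $\bar Y$ and its left-neighbours forms a clique of size at most $\gamma_l^j(H_2)$'' is false: the left-neighbours of a vertex $w\in Y_2$ inside $V_2$ need not lie in $Y_2$, hence need not be adjacent to $Y_1=\bar Y$, so this set is not a clique and its size is not controlled by $\omega'(w)$. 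Concretely, take $X_1\cap Y_1=\emptyset$, $Y_2=\{w\}$, and let $w$ have $t$ left-neighbours in $V_2\setminus Y_2$ forming a clique; the greedy pass may have placed $t$ colours on them all lying outside $\beta$, so $w$ sees $t+|Y_1|$ forbidden colours, whereas $l\geq\gamma_l^j(H_2)$ only guarantees roughly $l\geq\tfrac 12\bigl(t+|Y_1|+1+\max(t,|Y_1|)+1\bigr)$, which is far below $t+|Y_1|+1$ when $t$ and $|Y_1|$ are both large. The observation about $\alpha\cap\beta$ being forbidden on all of $X_2\cup Y_2$ is irrelevant to this failure mode ($\alpha\cap\beta$ may be empty), and the bound $\tfrac 12(d+1+\omega')$ is intrinsically too weak for a one-pass greedy extension: one must actively arrange for the colours appearing near $Y_2$ to overlap $\beta$ as much as possible, rather than colour greedily and hope.

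That alignment is where essentially all the work in the actual proof lies. The paper first modifies the colouring of $G_1$ so that the overlap $k$ between the colour sets on $X_1$ and $Y_1$ is maximal (note the lemma does not promise to extend the \emph{given} colouring of $G_1$, a freedom your extension-only formulation gives up), and then argues by induction on $l$: in most cases it deletes one colour class of $G_1$ together with a greedily chosen stable set of $G_2$ running from $X_2$ towards $Y_2$, chosen so that both $l$ and $\gamma_l^j(H_2)$ drop by one; and in the critical case $k=0$, $l=|X_1|+|Y_1|$ it colours $G_2-Y_1$ cyclically modulo $\omega$ and uses a ``roll back'' counting argument to force all but at most $l-\omega$ of the colours on $Y_2$ to also appear on $X_1$. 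Some version of these ideas is unavoidable given how tight the bound is, so as written your proof does not go through.
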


Since $\gamma_l^j(H_2) \leq \gamma_l(G)$, this lemma implies that no minimum counterexample to Theorem \ref{thm:ql} contains a canonical interval 2-join.

It is easy to see that a minimum counterexample cannot contain a simplicial vertex (i.e.\ a vertex whose neighbourhood is a clique).  Therefore in a canonical interval 2-join $((V_1,X_1,Y_1),(V_2,X_2,Y_2))$ in a minimum counterexample, all four cliques $X_1$, $Y_1$, $X_2$, and $Y_2$ must be nonempty.

\begin{proof}
We proceed by induction on $l$, observing that the case $l = 1$ is trivial.  We begin by modifying the colouring so that the number $k$ of colours used in both $X_1$ and $Y_1$ in the $l$-colouring of $G_1$ is maximal.  That is, if a vertex $v \in X_1$ gets a colour that is not seen in $Y_1$, then every colour appearing in $Y_1$ appears in $N(v)$.  This can be done in $O(n^2)$ time.  If $l$ exceeds $\gamma_l^j(H_2)$ we can just remove a colour class in $G_1$ and apply induction on what remains.  Thus we can assume that $l = \gamma_l^j(H_2)$ and so if we apply induction we must remove a stable set whose removal lowers both $l$ and $\gamma_l^j(H_2)$.

We use case analysis; when considering a case we may assume no previous case applies.  In some cases we extend the colouring of $G_1$ to an $l$-colouring of $G$ in one step.  In other cases we remove a colour class in $G_1$ together with vertices in $G_2$ such that everything we remove is a stable set, and when we remove it we reduce $\gamma_l^j(v)$ for every $v \in H_2$; after doing this we apply induction on $l$.  Notice that if $X_1 \cap Y_1 \neq \emptyset$ and there are edges between $X_2$ and $Y_2$ we may have a large clique in $H_2$ which contains some but not all of $X_1$ and some but not all of $Y_1$; this is not necessarily obvious but we deal with it in every applicable case.

\begin{itemize}
\item[Case 1.]$Y_1 \subseteq X_1$.

$H_2$ is a circular interval graph and $X_1$ is a clique cutset.  We can $\gamma_l(H_2)$-colour $H_2$ in $O(n^{3/2})$ time using Lemma \ref{lem:cig}.  By permuting the colour classes we can ensure that this colouring agrees with the colouring of $G_1$.  In this case $\gamma_l(H_2) \leq \gamma_l^j(H_2) \leq l$ so we are done.  By symmetry, this covers the case in which $X_1 \subseteq Y_1$.

\item[Case 2.]$k=0$ and $l > |X_1|+|Y_1|$.

Here $X_1$ and $Y_1$ are disjoint.  Take a stable set $S$ greedily from left to right in $G_2$.  By this we mean that we start with $S=\{v_1\}$, the leftmost vertex of $X_2$, and we move along the vertices of $G_2$ in linear order, adding a vertex to $S$ whenever doing so will leave $S$ a stable set.  So $S$ hits $X_2$.  If it hits $Y_2$, remove $S$ along with a colour class in $G_1$ not intersecting $X_1\cup Y_1$; these vertices together make a stable set.  If $v\in G_2$ it is easy to see that $\gamma_l^j(v)$ will drop: every remaining vertex in $G_2$ either loses two neighbours or is in $Y_2$, in which case $S$ intersects every maximal clique containing $v$.  If $v\in X_1 \cup Y_1$ then since $X_1$ and $Y_1$ are disjoint, $\omega'(v)$ is either $|X_1|+|X_2|$ or $|Y_1|+|Y_2|$; in either case $\omega'(v)$, and therefore $\gamma_l^j(v)$, drops when $S$ and the colour class are removed.  Therefore $\gamma_l^j(H_2)$ drops, and we can proceed by induction.

If $S$ does not hit $Y_2$ we remove $S$ along with a colour class from $G_1$ that hits $Y_1$ (and therefore not $X_1$).  Since $S\cap Y_2 = \emptyset$ the vertices together make a stable set.  Using the same argument as before we can see that removing these vertices drops both $l$ and $\gamma_l^j(H_2)$, so we can proceed by induction.

\item[Case 3.]$k=0$ and $l = |X_1|+|Y_1|$.

Again, $X_1$ and $Y_1$ are disjoint.  By maximality of $k$, every vertex in $X_1 \cup Y_1$ has at least $l-1$ neighbours in $G_1$.  Since $l=|X_1|+|Y_1|$ we know that $\omega'(X_1) \leq |X_1|+|Y_1|-|X_2|$ and $\omega'(Y_1) \leq |X_1|+|Y_1|-|Y_2|$.  Thus $|Y_1| \geq 2|X_2|$ and similarly $|X_1| \geq 2|Y_2|$.  Assume without loss of generality that $|Y_2| \leq |X_2|$.

We first attempt to $l$-colour $H_2 - Y_1$, which we denote by $H_3$, such that every colour in $Y_2$ appears in $X_1$ -- this is clearly sufficient to prove the lemma since we can permute the colour classes and paste this colouring onto the colouring of $G_1$ to get a proper $l$-colouring of $G$.  If $\omega(H_3) \leq l-|Y_2|$ then this is easy:  we can $\omega(H_3)$-colour the vertices of $H_3$, then use $|Y_2|$ new colours to recolour $Y_2$ and $|Y_2|$ vertices of $X_1$.  This is possible since $Y_2$ and $X_1$ have no edges between them.

Define $b$ as $l - \omega(H_3)$; we can assume that $b < |Y_2|$.  We want an $\omega(H_3)$-colouring of $H_3$ such that at most $b$ colours appear in $Y_2$ but not $X_1$.  There is some clique $C = \{v_i, \ldots, v_{i+\omega(H_3)-1}\}$ in $H_3$; this clique does not intersect $X_1$ because $|X_1 \cup X_2| \leq l-\frac 12|Y_1| \leq l - |Y_2|< l-b$.  Denote by $v_j$ the leftmost neighbour of $v_i$.  Since $\gamma_l^j(v_i) \leq l$, it is clear that $v_i$ has at most $2b$ neighbours outside $C$, and since $b < |Y_2| \leq \frac 12 |X_1|$ we can be assured that $v_i \notin X_2$.  Since $\omega(H_3)>|Y_2|$, $v_i \notin Y_2$.

We now colour $H_3$ from left to right, modulo $\omega(H_3)$.  If at most $b$ colours appear in $Y_2$ but not $X_1$ then we are done, otherwise we will ``roll back'' the colouring, starting at $v_i$.  That is, for every $p \geq i$, we modify the colouring of $H_3$ by giving $v_p$ the colour after the one that it currently has, modulo $\omega(H_3)$.  Since $v_i$ has at most $2b$ neighbours behind it, we can roll back the colouring at least $\omega(H_3)-2b-1$ times for a total of $\omega(H_3)-2b$ proper colourings of $H_3$.

Since $v_i \notin Y_2$ the colours on $Y_2$ will appear in order modulo $\omega(H_3)$.  Thus there are $\omega(H_3)$ possible sets of colours appearing on $Y_2$, and in $2b+1$ of them there are at most $b$ colours appearing in $Y_2$ but not $X_1$.  It follows that as we roll back the colouring of $H_3$ we will find an acceptable colouring.

Henceforth we will assume that $|X_1| \geq |Y_1|$.

\item[Case 4.]$0 < k < |X_1|$.

Take a stable set $S$ in $G_2 - X_2$ greedily from left to right.  If $S$ hits $Y_2$, we remove $S$ from $G$, along with a colour class from $G_1$ intersecting $X_1$ but not $Y_1$.  Otherwise, we remove $S$ along with a colour class from $G_1$ intersecting both $X_1$ and $Y_1$.  In either case it is a simple matter to confirm that $\gamma_l^j(v)$ drops for every $v \in H_2$ as we did in Case 2.  We proceed by induction.

\item[Case 5.]$k=|Y_1|=|X_1|=1$.

In this case $|X_1|=k=1$.  If $G_2$ is not connected then $X_1$ and $Y_1$ are both clique cutsets and we can proceed as in Case 1.  If $G_2$ is connected and contains an $l$-clique, then there is some $v \in V_2$ of degree at least $l$ in the $l$-clique.  Thus $\gamma_l^j(H_2) > l$, contradicting our assumption that $l \geq \gamma_l^j(H_2)$.  So $\omega(G_2)<l$.  We can $\omega(G_2)$-colour $G_2$ in linear time using only colours not appearing in $X_1 \cup Y_1$, thus extending the $l$-colouring of $G_1$ to a proper $l$-colouring of $G$.

\item[Case 6.]$k=|Y_1|=|X_1|> 1$.

Suppose that $k$ is not minimal.  That is, suppose there is a vertex $v \in X_1 \cup Y_1$ whose closed neighbourhood does not contain all $l$ colours in the colouring of $G_1$.  Then we can change the colour of $v$ and apply Case 4.  So assume $k$ is minimal.

Therefore every vertex in $X_1$ has degree at least $l+|X_2|-1$.  Since $X_1\cup X_2$ is a clique, $\gamma_l^j(H_2) \geq l \geq \frac 12 (l+|X_2|+|X_1|+|X_2|)$, so $2|X_2|\leq l-k$.  Similarly, $2|Y_2|\leq l-k$, so $|X_2|+|Y_2|\leq l-k$.  Since there are $l-k$ colours not appearing in $X_1\cup Y_1$, we can $\omega(G_2)$-colour $G_2$, then permute the colour classes so that no colour appears in both $X_1\cup Y_1$ and $X_2 \cup Y_2$.  Thus we can extend the $l$-colouring of $G_1$ to an $l$-colouring of $G$.
\end{itemize}

These cases cover every possibility, so we need only prove that the colouring can be found in $O(nm)$ time.  If $k$ has been maximized and we apply induction, $k$ will stay maximized:  every vertex in $X_1 \cup Y_1$ will have every remaining colour in its closed neighbourhood except possibly if we recolour a vertex in Case 6.  In this case the overlap in what remains is $k-1$, which is the most possible since we remove a vertex from $X_1$ or $Y_1$, each of which has size $k$.  Hence we only need to maximize $k$ once.  We can determine which case applies in $O(m)$ time, and it is not hard to confirm that whenever we extend the colouring in one step our work can be done in $O(nm)$ time.  When we apply induction, i.e.\ in Cases 2, 4, and possibly 6, all our work can be done in $O(m)$ time.  Since $l<n$ it follows that the entire $l$-colouring can be completed in $O(nm)$ time.
\end{proof}

\subsection{Putting the pieces together}

We can now prove an algorithmic version of Theorem \ref{thm:ql}.

\begin{theorem}
Let $G$ be a quasi-line graph on $n$ vertices and $m$ edges.  Then we can find a proper colouring of $G$ using $\gamma(G)$ colours in $O(n^3m^2)$ time.
\end{theorem}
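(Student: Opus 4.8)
The plan is to assemble the final theorem from all the pieces developed in the preceding sections, organized around the decomposition theorem (Theorem \ref{thm:decomposition}) and the standard clique-cutset machinery. First I would reduce to the case of a quasi-line graph $G$ containing no nonlinear homogeneous pair of cliques: by Lemma \ref{lem:homo}, at a cost of $O(n^{5/2})$ per application, and at most $n$ applications, we can pass to a quasi-line graph with the same chromatic number and no nonlinear homogeneous pair of cliques, and lift any colouring back within the same time bound. (One must check that $\gamma_l$ does not increase under this reduction; this follows since $G'$ is an induced subgraph of $G$, so each $d(v)$ and $\omega(v)$ can only decrease.) Next I would invoke Tarjan's clique-cutset decomposition \cite{tarjan85}: build the decomposition tree in $O(nm)$ time, colour each leaf, and paste the colourings together in $O(n^2)$ time. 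Since $\gamma_l$ restricted to an induced subgraph is at most $\gamma_l(G)$, it suffices to $\gamma_l(G)$-colour each leaf, i.e.\ each atom of the decomposition, which contains no clique cutset and no nonlinear homogeneous pair of cliques.

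So the core task is to colour a quasi-line graph $G$ with no clique cutset and no nonlinear homogeneous pair of cliques. By Theorem \ref{thm:decomposition}, such a $G$ is a line graph, a circular interval graph, or admits a canonical interval 2-join. Line graphs are handled by Theorem \ref{thm:main} (via $L(H)$) in $O(n^2)$ time; circular interval graphs are handled by Lemma \ref{lem:cig} in $O(n^{3/2})$ time. For the 2-join case, I would recurse: given a canonical interval 2-join $((V_1,X_1,Y_1),(V_2,X_2,Y_2))$, recursively $\gamma_l(G)$-colour $G_1 = G|V_1$ (which is quasi-line, has no nonlinear homogeneous pair of cliques since it is an induced subgraph of one that does not, and satisfies $\gamma_l(G_1) \le \gamma_l(G)$), then apply Lemma \ref{lem:quasilinemce2} with $l = \gamma_l(G) \ge \gamma_l^j(H_2)$ to extend to an $l$-colouring of $G$ in $O(nm)$ time. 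One should note that before recursing, $G_1$ may itself contain a clique cutset or a homogeneous pair, so strictly one re-enters the whole pipeline on $G_1$; the recursion terminates because $|V_1| < |V(G)|$.

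For the running time bound of $O(n^3 m^2)$, I would argue that each recursive/decomposition step strictly shrinks the vertex set, so there are $O(n)$ invocations of the "atom" colouring routine along any branch, and the branching from clique-cutset decomposition contributes at most $O(n)$ atoms total (Tarjan's tree has $O(n)$ nodes). The dominant per-step cost is plausibly the homogeneous-pair reduction at $O(n^{5/2})$ each with $O(n)$ applications, or the $O(n^2)$ line-graph colouring, giving a per-atom cost of roughly $O(n^{7/2})$ and a total that is comfortably $O(n^3 m^2)$; a looser accounting that simply multiplies the number of decomposition operations ($O(nm)$, say, counting homogeneous pairs and cutsets) by the cost of each ($O(nm)$ for cutset pasting and 2-join extension) also lands inside $O(n^3 m^2)$. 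I would present the bound as a generous over-estimate rather than a tight one, since the theorem only claims $O(n^3 m^2)$.

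The main obstacle I anticipate is bookkeeping the recursion cleanly: the three operations (homogeneous-pair reduction, clique-cutset decomposition, canonical interval 2-join) interact, and each can reintroduce structure the others remove, so I would want to fix an order — first eliminate all nonlinear homogeneous pairs of cliques, then perform clique-cutset decomposition, then on each cutset-free, homogeneous-pair-free atom apply Theorem \ref{thm:decomposition} and recurse only via the 2-join — and verify that this order is consistent (e.g.\ that clique-cutset decomposition does not create a nonlinear homogeneous pair of cliques in an atom, which holds because atoms are induced subgraphs). The other delicate point is confirming at each recursive step that the relevant parameter ($\gamma_l$, or $\gamma_l^j(H_2)$) is bounded by $\gamma_l(G)$ for the original $G$, so that a $\gamma_l(G)$-colouring of a subgraph always suffices as input to Lemma \ref{lem:quasilinemce2}; this is where the monotonicity of $d(v)$ and $\omega(v)$ under taking induced subgraphs, together with the observation $\gamma_l^j(H_2) \le \gamma_l(G)$ already recorded in the excerpt, does the work.
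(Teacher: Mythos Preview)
Your outline is essentially the paper's proof: reduce away nonlinear homogeneous pairs via Lemma~\ref{lem:homo}, strip off clique cutsets via Tarjan, and then on each atom invoke Theorem~\ref{thm:decomposition}, handling line graphs by Theorem~\ref{thm:main}, circular interval graphs by Lemma~\ref{lem:cig}, and the 2-join case by recursing on $G_1$ and extending via Lemma~\ref{lem:quasilinemce2}. The paper organises the recursion slightly differently (it does induction on $n$, absorbing the clique-cutset reduction up front and redoing the homogeneous-pair cleanup at every level), but the logical content is the same.

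Two concrete points you should tighten. First, you never say how to \emph{find} a canonical interval 2-join when one exists; Theorem~\ref{thm:decomposition} is existential, and for the algorithmic claim you need a polynomial-time detection routine. The paper invokes a result of King and Reed (Lemma~6.18 in \cite{kingthesis}) giving this in $O(n^2m)$ time; without something like this your 2-join step is not an algorithm. Second, your bound of ``at most $n$ applications'' of Lemma~\ref{lem:homo} is not justified: that lemma only guarantees a \emph{proper subgraph}, which may shed a single edge and no vertices, so the correct iteration bound is $m$, not $n$. This matters for the running-time bookkeeping. With those two fixes, the paper's accounting is simply: each level of the induction on $n$ costs $O(n^2m^2)$ (dominated by up to $m$ homogeneous-pair reductions and lifts, plus the $O(n^2m)$ 2-join detection), and there are at most $n$ levels, giving $O(n^3m^2)$ --- cleaner than the loose estimate you sketch.
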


\begin{proof}We proceed by induction on $n$.  As already explained, we need only consider graphs containing no clique cutsets since $n^3m^2 \geq nm$.  We begin by applying Lemma \ref{lem:homo} at most $m$ times in order to find a quasi-line subgraph $G'$ of $G$ such that $\chi(G)=\chi(G')$, and given a $k$-colouring of $G'$, we can find a $k$-colouring of $G$ in $O(n^2m^2)$ time.  We must now colour $G'$.

If $G'$ is a circular interval graph we can determine this and $\gamma_l(G)$-colour it in $O(n^{3/2})$ time.  If $G'$ is a line graph we can determine this in $O(m)$ time using an algorithm of Roussopoulos \cite{roussopoulos73}, then $\gamma_l(G)$-colour it in $O(n^{2})$ time.  Otherwise, $G'$ must admit a canonical interval 2-join.  In this case Lemma 6.18 in \cite{kingthesis}, due to King and Reed, tells us that we can find such a decomposition in $O(n^2m)$ time.

This canonical interval 2-join $((V_1,X_1, Y_1),(V_2,X_2, Y_2))$ leaves us to colour the induced subgraph $G_1$ of $G'$, which has at most $n-1$ vertices and is quasi-line.  Given a $\gamma_l(G)$-colouring of $G_1$ we can $\gamma_l(G)$-colour $G'$ in $O(nm)$ time, then reconstruct the $\gamma_l(G)$-colouring of $G$ in $O(n^2m^2)$ time.  The induction step takes $O(n^2m^2)$ time and reduces the number of vertices, so the total running time of the algorithm is $O(n^3m^2)$.
\end{proof}

\noindent{\bf Remark: }With some care and using more sophisticated results on decomposing quasi-line graphs (see \cite{chudnovskyk11}), we believe it should be possible to reduce the running time of the entire algorithm to $O(m^2)$.

\bibliography{masterbib}
\end{document}